
\documentclass[conference,letterpaper]{IEEEtran}

\addtolength{\topmargin}{0.5mm}

%
%
\usepackage[margin=0.55in]{geometry}
\usepackage[utf8]{inputenc} 
\usepackage[T1]{fontenc}

%

%
\ifCLASSINFOpdf

\else

\fi
%
\usepackage[cmex10]{amsmath}
\usepackage{ifthen}
\usepackage{cite}
\usepackage{amsthm,amsfonts,amssymb}
\setcounter{MaxMatrixCols}{17}
\usepackage{graphicx}
\usepackage{subfig}
\usepackage{blkarray}
\usepackage[mathscr]{euscript}
\usepackage{enumitem}
\usepackage{algorithm}
\usepackage[noend]{algpseudocode}
\usepackage{blkarray}
\usepackage{stfloats}%
\newtheorem{theorem}{Theorem}[section]

\newtheorem{lemma}{Lemma}[section]
\theoremstyle{remark}
\newtheorem*{remark}{Remark}
\theoremstyle{definition}
\newtheorem{definition}{Definition}[section]
\usepackage{url}
\usepackage{arydshln}
\usepackage{mathtools}
\usepackage{dsfont}
\usepackage{physics}
\usepackage{tikz}
\usepackage{mathdots}
\usepackage{cancel}
\usepackage{color}
\usepackage{siunitx}
\usepackage{array}
\usepackage{multirow}
\usepackage{amssymb}
\usepackage{gensymb}
\usepackage{tabularx}
\usepackage{extarrows}
\usepackage{booktabs}
\usetikzlibrary{fadings}
\usetikzlibrary{patterns}
\usetikzlibrary{shadows.blur}
\usetikzlibrary{shapes}

\interdisplaylinepenalty=2500

\usepackage[cmintegrals]{newtxmath}

\hyphenation{op-tical net-works semi-conduc-tor}

\begin{document}
%
\title{On the Performance of Reed-Muller Codes Over $(d,\infty)$-RLL Input-Constrained BMS Channels}
%
%
%
\author{%
	\IEEEauthorblockN{V.~Arvind Rameshwar}
	\IEEEauthorblockA{Indian Institute of Science, Bengaluru\\
		Email: \texttt{vrameshwar@iisc.ac.in}}
	\and
	\IEEEauthorblockN{Navin Kashyap}
	\IEEEauthorblockA{Indian Institute of Science, Bengaluru\\
		Email: \texttt{nkashyap@iisc.ac.in}
	}
}

\maketitle

\begin{abstract}
This paper considers the input-constrained binary memoryless symmetric (BMS) channel, without feedback. The channel input sequence respects the $(d,\infty)$-runlength limited (RLL) constraint, which mandates that any pair of successive $1$s be separated by at least $d$ $0$s. We consider the problem of designing explicit codes for such channels. In particular, we work with the Reed-Muller (RM) family of codes, which were shown by Reeves and Pfister (2021) to achieve the capacity of any unconstrained BMS channel, under bit-MAP decoding. We show that it is possible to pick $(d,\infty)$-RLL subcodes of a capacity-achieving (over the unconstrained BMS channel) sequence of RM codes such that the subcodes achieve, under bit-MAP decoding, rates of $C\cdot{2^{-\left \lceil \log_2(d+1)\right \rceil}}$, where $C$ is the capacity of the BMS channel. Finally, we also introduce  techniques for upper bounding the rate of any $(1,\infty)$-RLL subcode of a specific capacity-achieving sequence of RM codes. 
\end{abstract}


%
\IEEEpeerreviewmaketitle

\section{Introduction}
\label{sec:intro}
%
%
%
%
\IEEEPARstart{T}{he} physical limitations of hardware used in most data recording and communication systems cause some sequences to be more prone to error than others.  Constrained coding is a method of alleviating this problem, by encoding arbitrary user data sequences into sequences that respect a constraint (see, for example, \cite{Roth} or \cite{Immink}). In this work, we consider the problem of designing explicit constrained binary codes that achieve good rates of transmission over a noisy binary memoryless symmetric (BMS) channel. Examples of such channels include the binary erasure and binary symmetric channels (BEC and BSC, respectively) shown in Figures \ref{fig:bec} and \ref{fig:bsc}.

\begin{figure}[!h]
 \centering
\subfloat[]{
\resizebox{0.21\textwidth}{!}{

\tikzset{every picture/.style={line width=0.75pt}} 

\begin{tikzpicture}[x=0.75pt,y=0.75pt,yscale=-1,xscale=1]
	
	\draw    (454.24,136.67) -- (521.85,136.67) ;
	\draw [shift={(523.85,136.67)}, rotate = 180] [color={rgb, 255:red, 0; green, 0; blue, 0 }  ][line width=0.75]    (10.93,-3.29) .. controls (6.95,-1.4) and (3.31,-0.3) .. (0,0) .. controls (3.31,0.3) and (6.95,1.4) .. (10.93,3.29)   ;
	\draw    (454.24,136.67) -- (519.94,155.67) ;
	\draw [shift={(521.86,156.22)}, rotate = 196.13] [color={rgb, 255:red, 0; green, 0; blue, 0 }  ][line width=0.75]    (10.93,-3.29) .. controls (6.95,-1.4) and (3.31,-0.3) .. (0,0) .. controls (3.31,0.3) and (6.95,1.4) .. (10.93,3.29)   ;
	\draw    (454.24,173.79) -- (521.85,173.79) ;
	\draw [shift={(523.85,173.79)}, rotate = 180] [color={rgb, 255:red, 0; green, 0; blue, 0 }  ][line width=0.75]    (10.93,-3.29) .. controls (6.95,-1.4) and (3.31,-0.3) .. (0,0) .. controls (3.31,0.3) and (6.95,1.4) .. (10.93,3.29)   ;
	\draw    (454.24,173.79) -- (519.93,156.72) ;
	\draw [shift={(521.86,156.22)}, rotate = 165.43] [color={rgb, 255:red, 0; green, 0; blue, 0 }  ][line width=0.75]    (10.93,-3.29) .. controls (6.95,-1.4) and (3.31,-0.3) .. (0,0) .. controls (3.31,0.3) and (6.95,1.4) .. (10.93,3.29)   ;
	
	\draw (476.48,146) node [anchor=north west][inner sep=0.75pt]  [font=\footnotesize]  {$\epsilon $};
	\draw (476.48,158) node [anchor=north west][inner sep=0.75pt]  [font=\footnotesize]  {$\epsilon $};
	\draw (477.13,123.4) node [anchor=north west][inner sep=0.75pt]  [font=\footnotesize]  {$1-\epsilon $};
	\draw (477.13,175) node [anchor=north west][inner sep=0.75pt]  [font=\footnotesize]  {$1-\epsilon $};
	\draw (443.86,131.32) node [anchor=north west][inner sep=0.75pt]  [font=\footnotesize]  {${\displaystyle 0}$};
	\draw (443.86,166.96) node [anchor=north west][inner sep=0.75pt]  [font=\footnotesize]  {$1$};
	\draw (525.01,131.32) node [anchor=north west][inner sep=0.75pt]  [font=\footnotesize]  {$1$};
	\draw (525.01,149.88) node [anchor=north west][inner sep=0.75pt]  [font=\footnotesize]  {$0$};
	\draw (525.01,167.7) node [anchor=north west][inner sep=0.75pt]  [font=\footnotesize]  {$-1$};

\end{tikzpicture}
}
\label{fig:bec}
}
\qquad
\subfloat[]{
\resizebox{0.2\textwidth}{!}{

\tikzset{every picture/.style={line width=0.75pt}} 

\begin{tikzpicture}[x=0.75pt,y=0.75pt,yscale=-1,xscale=1]
	
	\draw    (449.24,119.67) -- (516.85,119.67) ;
	\draw [shift={(518.85,119.67)}, rotate = 180] [color={rgb, 255:red, 0; green, 0; blue, 0 }  ][line width=0.75]    (10.93,-3.29) .. controls (6.95,-1.4) and (3.31,-0.3) .. (0,0) .. controls (3.31,0.3) and (6.95,1.4) .. (10.93,3.29)   ;
	\draw    (449.24,119.67) -- (517.09,155.85) ;
	\draw [shift={(518.85,156.79)}, rotate = 208.07] [color={rgb, 255:red, 0; green, 0; blue, 0 }  ][line width=0.75]    (10.93,-3.29) .. controls (6.95,-1.4) and (3.31,-0.3) .. (0,0) .. controls (3.31,0.3) and (6.95,1.4) .. (10.93,3.29)   ;
	\draw    (449.24,156.79) -- (516.85,156.79) ;
	\draw [shift={(518.85,156.79)}, rotate = 180] [color={rgb, 255:red, 0; green, 0; blue, 0 }  ][line width=0.75]    (10.93,-3.29) .. controls (6.95,-1.4) and (3.31,-0.3) .. (0,0) .. controls (3.31,0.3) and (6.95,1.4) .. (10.93,3.29)   ;
	\draw    (449.24,156.79) -- (517.09,120.61) ;
	\draw [shift={(518.85,119.67)}, rotate = 151.93] [color={rgb, 255:red, 0; green, 0; blue, 0 }  ][line width=0.75]    (10.93,-3.29) .. controls (6.95,-1.4) and (3.31,-0.3) .. (0,0) .. controls (3.31,0.3) and (6.95,1.4) .. (10.93,3.29)   ;
	
	\draw (485.05,124) node [anchor=north west][inner sep=0.75pt]  [font=\footnotesize]  {$p$};
	\draw (483.48,144) node [anchor=north west][inner sep=0.75pt]  [font=\footnotesize]  {$p$};
	\draw (471.13,104.4) node [anchor=north west][inner sep=0.75pt]  [font=\footnotesize]  {$1-p$};
	\draw (470.13,159.4) node [anchor=north west][inner sep=0.75pt]  [font=\footnotesize]  {$1-p$};
	\draw (438.86,114.32) node [anchor=north west][inner sep=0.75pt]  [font=\footnotesize]  {${\displaystyle 0}$};
	\draw (438.86,149.96) node [anchor=north west][inner sep=0.75pt]  [font=\footnotesize]  {$1$};
	\draw (520.01,114.32) node [anchor=north west][inner sep=0.75pt]  [font=\footnotesize]  {$1$};
	\draw (520.01,150.7) node [anchor=north west][inner sep=0.75pt]  [font=\footnotesize]  {$-1$};

\end{tikzpicture}
}
\label{fig:bsc}}
\caption{(a) The binary erasure channel (BEC$(\epsilon)$) with erasure probability $\epsilon$ and output alphabet $\mathscr{Y} = \{-1,0,1\}$. (b) The binary symmetric channel (BSC$(p)$) with crossover probability $p$ and output alphabet $\mathscr{Y} = \{-1,1\}$.}
\end{figure}
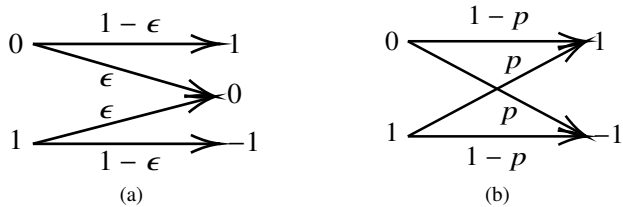
The input constraint of interest to us is the $(d,\infty)$-RLL constraint, which mandates that there are at least $d$ $0$s between every pair of successive $1$s in the input sequence. Figure 2 shows a state transition graph that represents the constraint. This constraint is a special case of the $(d,k)$-RLL constraint, which admits only binary sequences with at least $d$ and at most $k$ $0$s between successive $1$s. 
Reference \cite{Immink2} includes examples of $(d,k)$-RLL codes used in practice in magnetic storage and recording. 
\begin{figure}[!h]

\begin{center}
\resizebox{0.5\textwidth}{!}{
\tikzset{every picture/.style={line width=0.75pt}} 

\begin{tikzpicture}[x=0.75pt,y=0.75pt,yscale=-1,xscale=1]


\draw  [color={rgb, 255:red, 74; green, 144; blue, 226 }  ,draw opacity=1 ][line width=1.5]  (130,96) .. controls (130,82.19) and (141.19,71) .. (155,71) .. controls (168.81,71) and (180,82.19) .. (180,96) .. controls (180,109.81) and (168.81,121) .. (155,121) .. controls (141.19,121) and (130,109.81) .. (130,96) -- cycle ;
\draw  [color={rgb, 255:red, 74; green, 144; blue, 226 }  ,draw opacity=1 ][line width=1.5]  (230,96) .. controls (230,82.19) and (241.19,71) .. (255,71) .. controls (268.81,71) and (280,82.19) .. (280,96) .. controls (280,109.81) and (268.81,121) .. (255,121) .. controls (241.19,121) and (230,109.81) .. (230,96) -- cycle ;
\draw  [color={rgb, 255:red, 74; green, 144; blue, 226 }  ,draw opacity=1 ][line width=1.5]  (430,96) .. controls (430,82.19) and (441.19,71) .. (455,71) .. controls (468.81,71) and (480,82.19) .. (480,96) .. controls (480,109.81) and (468.81,121) .. (455,121) .. controls (441.19,121) and (430,109.81) .. (430,96) -- cycle ;
\draw  [color={rgb, 255:red, 74; green, 144; blue, 226 }  ,draw opacity=1 ][line width=1.5]  (530,96) .. controls (530,82.19) and (541.19,71) .. (555,71) .. controls (568.81,71) and (580,82.19) .. (580,96) .. controls (580,109.81) and (568.81,121) .. (555,121) .. controls (541.19,121) and (530,109.81) .. (530,96) -- cycle ;
\draw [line width=1.5]    (180,96) -- (226,96) ;
\draw [shift={(230,96)}, rotate = 180] [fill={rgb, 255:red, 0; green, 0; blue, 0 }  ][line width=0.08]  [draw opacity=0] (11.61,-5.58) -- (0,0) -- (11.61,5.58) -- cycle    ;
\draw [line width=1.5]    (280,96) -- (326,96) ;
\draw [shift={(330,96)}, rotate = 180] [fill={rgb, 255:red, 0; green, 0; blue, 0 }  ][line width=0.08]  [draw opacity=0] (11.61,-5.58) -- (0,0) -- (11.61,5.58) -- cycle    ;
\draw [line width=1.5]    (380,96) -- (426,96) ;
\draw [shift={(430,96)}, rotate = 180] [fill={rgb, 255:red, 0; green, 0; blue, 0 }  ][line width=0.08]  [draw opacity=0] (11.61,-5.58) -- (0,0) -- (11.61,5.58) -- cycle    ;
\draw [line width=1.5]    (480,96) -- (526,96) ;
\draw [shift={(530,96)}, rotate = 180] [fill={rgb, 255:red, 0; green, 0; blue, 0 }  ][line width=0.08]  [draw opacity=0] (11.61,-5.58) -- (0,0) -- (11.61,5.58) -- cycle    ;
\draw [line width=1.5]    (556,121) -- (556.5,185.73) ;
\draw [line width=1.5]    (156.03,125) -- (156.5,185.73) ;
\draw [shift={(156,121)}, rotate = 89.56] [fill={rgb, 255:red, 0; green, 0; blue, 0 }  ][line width=0.08]  [draw opacity=0] (11.61,-5.58) -- (0,0) -- (11.61,5.58) -- cycle    ;
\draw [line width=1.5]    (156.5,185.73) .. controls (155.5,212.73) and (167.5,213.73) .. (211.5,214.73) ;
\draw [line width=1.5]    (556.5,185.73) .. controls (557.5,213.73) and (551.5,214.73) .. (512.5,214.73) ;
\draw [line width=1.5]    (211.5,214.73) -- (328.5,214.73) ;
\draw [line width=1.5]    (386.5,214.73) -- (518.5,214.73) ;
\draw [shift={(382.5,214.73)}, rotate = 0] [fill={rgb, 255:red, 0; green, 0; blue, 0 }  ][line width=0.08]  [draw opacity=0] (11.61,-5.58) -- (0,0) -- (11.61,5.58) -- cycle    ;
\draw [line width=1.5]    (576.5,81.73) .. controls (622.56,60.17) and (626.36,138.5) .. (576.61,117.17) ;
\draw [shift={(573.5,115.73)}, rotate = 386.13] [fill={rgb, 255:red, 0; green, 0; blue, 0 }  ][line width=0.08]  [draw opacity=0] (11.61,-5.58) -- (0,0) -- (11.61,5.58) -- cycle    ;

\draw (344,96) node [anchor=north west][inner sep=0.75pt]  [font=\large]  {$\dotsc $};
\draw (344,210.7) node [anchor=north west][inner sep=0.75pt]  [font=\large]  {$\dotsc $};
\draw (150,88.4) node [anchor=north west][inner sep=0.75pt] [font=\Large]   {$0$};
\draw (250,88.4) node [anchor=north west][inner sep=0.75pt]   [font=\Large] {$1$};
\draw (438,88.4) node [anchor=north west][inner sep=0.75pt]   [font=\large] {$d-1$};
\draw (550,88.4) node [anchor=north west][inner sep=0.75pt]  [font=\Large]  {$d$};
\draw (198,73.4) node [anchor=north west][inner sep=0.75pt]  [font=\Large]  {$0$};
\draw (295,73.4) node [anchor=north west][inner sep=0.75pt]  [font=\Large]  {$0$};
\draw (395,73.4) node [anchor=north west][inner sep=0.75pt]  [font=\Large]  {$0$};
\draw (495,73.4) node [anchor=north west][inner sep=0.75pt]  [font=\Large]  {$0$};
\draw (615,89.4) node [anchor=north west][inner sep=0.75pt] [font=\Large]   {$0$};
\draw (543,143.4) node [anchor=north west][inner sep=0.75pt]  [font=\Large]  {$1$};
\draw (161,144.4) node [anchor=north west][inner sep=0.75pt] [font=\Large]   {$1$};

\end{tikzpicture}
}
\end{center}
\label{fig:d_inf}
\caption{The state transition graph for the $(d,\infty)$-RLL constraint.}
\end{figure}
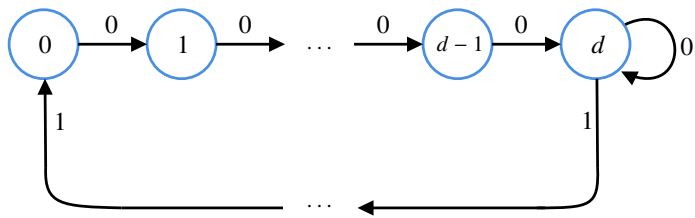
The system model under investigation in this paper considers input sequences that respect the $(d,\infty)$-RLL constraint, passed through the classical BMS channel, which is a special case of the discrete memoryless channel (DMC), introduced by Shannon in \cite{Sh48}. Figure \ref{fig:gen_const_DMC} shows a generic BMS channel with input constraints. Input-constrained DMCs in general fall under the broad class of discrete finite-state channels (DFSCs, or FSCs).

\begin{figure}[!t]
	\centering
	\resizebox{0.5\textwidth}{!}{

		\tikzset{every picture/.style={line width=0.75pt}} 
		
		\begin{tikzpicture}[x=0.75pt,y=0.75pt,yscale=-1,xscale=1]
			
			\draw   (321,106.32) -- (391,106.32) -- (391,174.32) -- (321,174.32) -- cycle ;
			\draw    (390.5,137.65) -- (438.5,137.65) ;
			\draw [shift={(440.5,137.65)}, rotate = 180] [color={rgb, 255:red, 0; green, 0; blue, 0 }  ][line width=0.75]    (10.93,-3.29) .. controls (6.95,-1.4) and (3.31,-0.3) .. (0,0) .. controls (3.31,0.3) and (6.95,1.4) .. (10.93,3.29)   ;
			\draw   (441,105) -- (511,105) -- (511,180) -- (441,180) -- cycle ;
			\draw   (561,117.65) -- (631,117.65) -- (631,157.65) -- (561,157.65) -- cycle ;
			\draw    (510.5,137.65) -- (558.5,137.65) ;
			\draw [shift={(560.5,137.65)}, rotate = 180] [color={rgb, 255:red, 0; green, 0; blue, 0 }  ][line width=0.75]    (10.93,-3.29) .. controls (6.95,-1.4) and (3.31,-0.3) .. (0,0) .. controls (3.31,0.3) and (6.95,1.4) .. (10.93,3.29)   ;
			\draw    (630.5,137.65) -- (678.5,137.65) ;
			\draw [shift={(680.5,137.65)}, rotate = 180] [color={rgb, 255:red, 0; green, 0; blue, 0 }  ][line width=0.75]    (10.93,-3.29) .. controls (6.95,-1.4) and (3.31,-0.3) .. (0,0) .. controls (3.31,0.3) and (6.95,1.4) .. (10.93,3.29)   ;
			\draw    (250.5,136.65) -- (318.5,136.65) ;
			\draw [shift={(320.5,136.65)}, rotate = 180] [color={rgb, 255:red, 0; green, 0; blue, 0 }  ][line width=0.75]    (10.93,-3.29) .. controls (6.95,-1.4) and (3.31,-0.3) .. (0,0) .. controls (3.31,0.3) and (6.95,1.4) .. (10.93,3.29)   ;

			\draw (461,151.01) node [anchor=north west][inner sep=0.75pt]  [font=\normalsize]  {$P_{Y|X}$};
			\draw (649,121.01) node [anchor=north west][inner sep=0.75pt]  [font=\normalsize]  {$\hat{m}$};
			\draw (529,120.01) node [anchor=north west][inner sep=0.75pt]  [font=\normalsize]  {$y^{n}$};
			\draw (409,121.01) node [anchor=north west][inner sep=0.75pt]  [font=\normalsize]  {$x^{n}$};
			\draw (571,132.61) node [anchor=north west][inner sep=0.75pt]  [font=\normalsize] [align=left] {Decoder};
			\draw (323,124) node [anchor=north west][inner sep=0.75pt]   [align=left] {{\normalsize Constrained}\\{\normalsize \ \ Encoder}};
			\draw (461,123) node [anchor=north west][inner sep=0.75pt]   [align=left] {BMS};
			\draw (255,116) node [anchor=north west][inner sep=0.75pt]  [font=\normalsize]  {${\displaystyle m\in \left[ 2^{nR}\right]}$};

		\end{tikzpicture}
}	
	\caption{System model of an input-constrained binary memoryless symmetric (BMS) channel without feedback.}
	\label{fig:gen_const_DMC}
\end{figure}
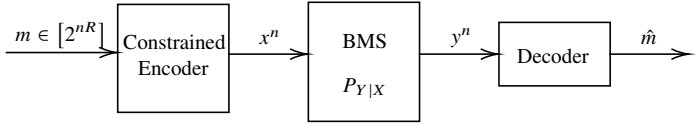

While explicit codes achieving the capacities or whose rates are very close to the capacity of unconstrained DMCs have been derived in works such as \cite{polar, kud1, luby, ru1, kud2}, the problem of designing coding schemes for input-constrained DMCs has not, to the best of our knowledge, been investigated in the literature. Moreover, unlike the case of the unconstrained DMC, whose capacity is characterized by Shannon's  single-letter, computable formula, $C_{\text{DMC}} = \sup_{P(x)} I(X;Y)$, the explicit computation of the capacity of an FSC, given by the maximum mutual information rate between inputs and outputs, is a much more difficult problem to tackle. In fact, the computation of the mutual information rate even for the simple case of Markov inputs reduces to the computation of the entropy rate of a Hidden Markov process---a well-known hard problem.

In this paper, we use the result of Reeves and Pfister \cite{Reeves} that Reed-Muller (RM) codes achieve the capacity of the unconstrained BMS channel under bit-MAP decoding, to design codes over our families of constrained BMS channels. We note that for the specific setting of the BEC, Kudekar et al.\ in \cite{kud1} were the first to show that Reed-Muller (RM) codes are capacity-achieving. Our approach to designing constrained codes over the BMS channel is simply to identify $(d,\infty)$-RLL subcodes of a sequence of capacity-achieving (over the unconstrained BMS channel) RM codes. Our results show that rates of $C\cdot{2^{-\left \lceil \log_2(d+1)\right \rceil}}$ are achievable over the $(d,\infty)$-RLL input-constrained BMS channels, where $C$ is the capacity of the unconstrained channel. Our results can be seen as accompanying the analysis in \cite{pvk}, on rates achievable by $(d,k)$-RLL subcodes of cosets of a linear block code. Specifically, from Corollary 1 of \cite{pvk}, we see that there exist cosets of  capacity-achieving (over the unconstrained BMS channel) codes, whose constrained subcodes have rate at least $C_0 + C -1$, where $C_0$ is the noiseless capacity of the input constraint. Our result provides an explicit sequence of codes of asymptotic rate that, for the high noise regimes of the BEC and BSC (large $\epsilon$, for the BEC, and $p$ close to 0.5, for the BSC), can be seen to be larger than the rate obtained using cosets of capacity-achieving codes in \cite{pvk}. The advantage of identifying such explicit codes is that low-complexity, off-the-shelf decoders can be employed to simplify the decoding process. 

Finally, we derive an upper bound on the rate of the largest $(1,\infty)$-RLL subcodes of the specific RM codes that we had used in our lower bounds. Our novel method of analysis uses properties of the weight distribution of RM codes---a topic that has received revived attention over the last decade (see, for example, the survey \cite{rm_survey} and the papers \cite{abbe1}, \cite{kaufman} and \cite{sberlo}). We hope that our techniques will prove useful in deriving upper bounds for other $(d,k)$- and $(d,\infty)$-RLL constraints, and will be extended, in future work, to any sequence of RM codes that is capacity-achieving over the unconstrained BMS channel.

The remainder of the paper is organized as follows: Section \ref{sec:notation} introduces the notation and refreshes some preliminary background. Section \ref{sec:main} states our main results. Section \ref{sec:rm} then describes the identification of subcodes of RM codes that achieve good rates over BMS channels and Section \ref{sec:rmub} discusses a technique that could be used to obtain an upper bound on the rate achievable over the BEC, using $(1,\infty)$-RLL subcodes of specific RM codes of constant rate. 
Finally, Section \ref{sec:conclusion} contains concluding remarks and a discussion on possible future work.

\section{Notation and Preliminaries}
\label{sec:notation}
\subsection{Notation}

Random variables will be denoted by capital letters, and their realizations by lower-case letters, e.g., $X$ and $x$, respectively. Calligraphic letters, e.g., $\mathscr{X}$, denote sets. The notation $[n]$ denotes the set, $\{1,2,\ldots,n\}$, of integers, and the notation $[a:b]$, for $a<b$, denotes the set of integers $\{a,a+1,\ldots,b\}$. Moreover, for a real number $x$, we use $\left \lceil x \right \rceil$ to denote the smallest integer larger than or equal to $x$. For vectors $\mathbf{w}$ and $\mathbf{v}$ of length $n$ and $m$, respectively, we denote their concatenation by the $(m+n)$-length vector, $\mathbf{w}\mathbf{v}$. The notation 
$x^N$ denotes the vector $(x_1,\ldots,x_N)$. Further, $P(x), P(y)$ and $P(y|x)$ are used to denote the probabilities $P_X(x), P_Y(y)$ and $P_{Y|X}(y|x)$, respectively, with the notation overloaded to refer to probability density functions in the case of continuous-valued random variables. The notation $X\sim \mathcal{N}(\mu, \sigma^2)$ refers to the fact that the random variable $X$ is drawn according to the Gaussian distribution, with mean $\mu$ and variance $\sigma^2>0$. 
Also, the notation $h_b(p):=-p\log_2 p - (1-p)\log_2(1-p)$ is the binary entropy function, for $p\in [0,1]$. All through, the empty summation is defined to be $0$, and the empty product is defined to be $1$. 
We write exp$_2(z)$ for $2^z$, where $z\in \mathbb{R}$. A logarithm to the base $2$ is denoted simply by $\log$, while the notation $\ln$ refers to the natural logarithm. Throughout, we use the convenient notation $\binom{m}{\le r}$ to denote the summation $\sum\limits_{i=0}^r \binom{m}{i}$.

\subsection{Problem Definition}

The communication setting of an input-constrained binary memoryless symmetric (BMS) channel without feedback is shown in Figure 4. A message $M$ is drawn uniformly from the set $\{1,2,\ldots,2^{nR}\}$, and is made available to the constrained encoder. The encoder produces a binary input sequence $x^n \in \{0,1\}^n = \mathscr{X}^n$, which is constrained to obey the $(d,\infty)$-RLL input constraint, a state transition graph for which is shown in Figure 2.

The channel output alphabet is the extended real line, i.e., $\mathscr{Y} = \overline{\mathbb{R}}$. The channel is memoryless in the sense that $P(y_i|x^{i},y^{i-1}) = P(y_i|x_i)$, for all $i$. Further, the channel is symmetric, in that $P(y|1) = P(-y|0)$, for all $y\in \mathscr{Y}$. Every such channel can be expressed as a multiplicative noise channel, in the following sense: if at any $i$ the input random symbol is $X_i\in \{0,1\}$, then the corresponding output symbol $Y_i\in \mathscr{Y}$ is given by
\[
Y_i = (-1)^{X_i}\cdot Z_i,
\]
where the noise random variables $Z^n$ are independent and identically distributed, and the noise process $(Z_i)_{i\geq 1}$ is independent of the input process $(X_i)_{i\geq 1}$. Common examples of such channels include the binary erasure channel (BEC$(\epsilon)$), with $P(Z_i = 1) = 1-\epsilon$ and $P(Z_i=0)=\epsilon$, the binary symmetric channel (BSC), with $P(Z_i = 1) = 1-p$ and $P(Z_i=-1)=p$, and the binary additive white Gaussian noise (BI-AWGN) channel, where $Z_i\sim \mathcal{N}(1,\sigma^2)$. Figures \ref{fig:bec} and \ref{fig:bsc} depict the BEC and BSC, pictorially. 
\begin{definition}
	\label{def:ach}
	An $(n,2^{nR},(d,\infty))$ code for an input-constrained channel {without feedback} is defined by the encoding function:
	\begin{equation}
		\label{eq:encoder}
		f: \{1,\ldots, 2^{nR}\}\rightarrow \mathscr{X}^n, \quad i\in [n],
	\end{equation}
	such that $(x_{i+1},\ldots,x_{\min\{i+d,n\}}) = (0,\ldots,0)$, if $x_i = 1$. 
	
	Given an output sequence $y^n$, the bit-MAP decoder $\Psi: \mathscr{Y}^n\rightarrow \mathscr{X}^n$ outputs $\hat{\mathbf{x}}:=(\hat{x}_1,\ldots,\hat{x}_n)$, where, for each $i\in [n]$, the estimate
	\begin{equation*}
		\hat{x}_i:=\text{argmax}_{x\in \{0,1\}}  P(X_i=x|y^n).
	\end{equation*}
	The error under bit-MAP decoding is defined as $$P_b^{(n)}:=1-\frac{1}{n} \sum_{i=1}^{n}\mathbb{E}[\max\{P(X_i=0|Y^n),P(X_i=1|Y^n)\}].$$
	A rate $R$ is said to be $(d,\infty)$-achievable under bit-MAP decoding, if there exists a sequence of $(n,2^{nR_n},(d,\infty))$ codes, $\{\mathcal{C}^{(n)}(R)\}_{n\geq 1}$, such that $\lim_{n\rightarrow \infty} P_b^{(n)} = 0$ and $\lim_{n\rightarrow \infty} R_n = R$. The sequence of codes $\{\mathcal{C}^{(n)}(R)\}$ is then said to achieve a rate of $R$ over the $(d,\infty)$-RLL input-constrained channel under bit-MAP decoding. The capacity, $C_{(d,\infty)}$, is defined to be the supremum over the respective $(d,\infty)$-achievable rates, and is a function of the parameters of the noise process. In this work, since we compute bounds on rates achievable using $(d,\infty)$-RLL subcodes of linear codes, we use the sub-optimal bit-MAP decoders for the larger linear codes, for decoding. Finally, a family of sequences of codes $\{\{\hat{\mathcal{C}}^{(n)}_{\mathbf{p}}\}_{n\geq 1}\}$ indexed by the noise parameters $\mathbf{p}$ is said to be \emph{capacity-achieving} (or $(d,\infty)$-capacity-achieving) if for all $\mathbf{p}$, $\{\hat{\mathcal{C}}^{(n)}_\mathbf{p}\}_{n\geq 1}$ achieves any rate $R\in (0,C_{(d,\infty)}(\mathbf{p}))$ over the $(d,\infty)$-RLL input-constrained channel. Similar definitions hold for the unconstrained (or $(0,\infty)$-RLL input-constrained) channel as well.
\end{definition}

%
%
%

\subsection{Reed-Muller Codes}
We recall the definition of the binary Reed-Muller (RM) family of codes. Codewords of binary RM codes consist of the evaluation vectors of multivariate polynomials over the binary field $\mathbb{F}_2$. Consider the polynomial ring $\mathbb{F}_2[x_1,x_2,\ldots,x_m]$ in $m$ variables. Note that in the specification of a polynomial $f\in \mathbb{F}_2[x_1,x_2,\ldots,x_m]$, only monomials of the form $\prod_{j\in S:S\subseteq [m]} x_j$ need to be considered, since $x^2 = x$ over the field $\mathbb{F}_2$, for an indeterminate $x$. For a polynomial $f\in \mathbb{F}_2[x_1,x_2,\ldots,x_m]$ and a binary vector $\mathbf{z} = (z_1,\ldots,z_m)\in \mathbb{F}_2^m$, let Eval$_\mathbf{z}(f):=f(z_1,\ldots,z_m)$. We let the evaluation points be ordered according to the standard lexicographic order on strings in $\mathbb{F}_2^m$, i.e., if $\mathbf{z} = (z_1,\ldots,z_m)$ and $\mathbf{z}^{\prime} = (z_1^{\prime},\ldots,z_m^{\prime})$ are two distinct evaluation points, then, $\mathbf{z}$ occurs before $\mathbf{z}^{\prime}$ in our ordering if and only if for some $i\geq 1$, it holds that $z_j = z_j^{\prime}$ for all $j<i$, and $z_i < z_i^{\prime}$. Now, let Eval$(f):=\left(\text{Eval}_\mathbf{z}(f):\mathbf{z}\in \mathbb{F}_2^m\right)$ be the evaluation vector of $f$, where the co-ordinates $\mathbf{z}$ are ordered according to the standard lexicographic order. 

\begin{definition}[see \cite{mws}, Chap. 13, or \cite{rm_survey}]
	The $r^{\text{th}}$ order binary Reed-Muller code RM$(m,r)$ is defined as the set of binary vectors:
	\[
	\text{RM}(m,r):=\{\text{Eval}(f): f\in \mathbb{F}_2[x_1,x_2,\ldots,x_m],\ \text{deg}(f)\leq r\},
	\]
	where $\text{deg}(f)$ is the degree of the largest monomial in $f$, and the degree of a monomial $\prod_{j\in S: S\subseteq [m]} x_j$ is simply $|S|$. 
\end{definition}

It is well-known that RM$(m,r)$ has dimension $\binom{m}{\le r} := \sum_{i=0}^{r}{m \choose i}$ and minimum Hamming distance $2^{m-r}$. The weight of a codeword $\mathbf{c} = \text{Eval}(f)$ is the number of $1$s in its evaluation vector, i.e,
\[
\text{wt}\left(\text{Eval}(f)\right):=|\{\mathbf{z}\in \mathbb{F}_2^m: f(\mathbf{z})=1\}|.
\]

The number of codewords in RM$(m,r)$ of weight $w$, for $w\in[2^{m-r}: 2^m]$, is given by the weight distribution function at $w$:
\[
A_{m,r}(w):=|\{\mathbf{c}\in \text{RM}(m,r): \text{wt}\left(\mathbf{c}\right) = w\}|.
\]
The subscripts $m$ and $r$ in $A_{m,r}$ will be suppressed when clear from context.
\section{Main Results}
\label{sec:main}
We now recall the main result of Reeves and Pfister in \cite{Reeves}. 
For a given $R\in (0,1)$, we consider the sequence of RM codes $\{\mathcal{C}_m(R)\}_{m\geq 1}$ under the lexicographic ordering of coordinates, where $\mathcal{C}_m(R) = \text{RM}(m,r_m)$, with 
\begin{equation}
	\label{eq:rmval}
	r_m := \max \left\{\left \lfloor \frac{m}{2}+\frac{\sqrt{m}}{2}Q^{-1}(1-R)\right \rfloor,0\right\},
\end{equation}
where $Q(\cdot)$ is the complementary cumulative distribution function (c.c.d.f.) of the standard normal distribution, i.e.,
\[
Q(t) = \frac{1}{\sqrt{2\pi}}\int_{t}^{\infty}e^{-\tau^2/2}d\tau, \ t\in \mathbb{R}.
\]

If $R_m$ is the rate of $\mathcal{C}_m(R)$, then, from Remark 24 in \cite{kud1}, it holds that $R_m\to R$ as $m\to \infty$. The following theorem then holds true:

\begin{theorem}[Theorem 1 of \cite{Reeves}]
	\label{thm:Reeves}
	Consider an unconstrained BMS channel with capacity $C\in (0,1)$. Then, any rate $R\in [0,C)$ is achieved by the sequence of codes $\{\mathcal{C}_m(R)\}$, under bit-MAP decoding.
\end{theorem}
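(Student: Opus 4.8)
The plan is to reduce capacity-achievability under bit-MAP decoding to the statement that the average per-coordinate conditional entropy, $\frac{1}{n}\sum_{i=1}^{n} H(X_i \mid Y^n)$ (with blocklength $n = 2^m$), vanishes as $m\to\infty$ for every fixed $R < C$. For a binary input this quantity controls the bit-MAP error $P_b^{(n)}$ in both directions (by Fano's inequality and the elementary relation between posterior entropy and posterior max-probability), so driving it to $0$ is exactly what must be shown. To get a handle on it I would embed the target channel in a smooth one-parameter family $\{W_t\}_{t\in[0,1]}$ of BMS channels that degrades monotonically from a noiseless channel at $t=0$ to the given channel at $t=1$ (for the BEC this is simply sweeping the erasure probability; in general one uses an interpolation consistent with stochastic degradation). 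Along this family one tracks the generalized extrinsic information transfer (GEXIT) functional and the extrinsic conditional entropies $h_i(t) := H(X_i \mid Y_{\sim i}(t))$, where $Y_{\sim i}$ denotes the output coordinates other than $i$.

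Two structural facts then drive the argument. First, the \emph{area theorem}: integrating the average GEXIT/EXIT functional over the interpolation family yields a fixed affine function of the code rate $R_m$ alone (for the BEC the EXIT area equals $1 - R_m$), so the location of the transition is forced to sit at the channel whose capacity equals $R_m$. Second, \emph{symmetry}: the automorphism group of $\mathrm{RM}(m,r_m)$ contains the affine group, which acts doubly transitively on the $2^m$ evaluation coordinates; hence all per-coordinate functions $h_i(t)$ coincide with a common scalar curve $h(t)$, and the relevant pairwise conditional quantities are likewise exchangeable. This collapses the analysis to a single monotone curve $h(t)$ whose integral against the interpolation is pinned down.

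The crux is a \emph{sharp-threshold} statement: $h(t)$ transitions from near $0$ to near $1$ over a window in $t$ that shrinks as $m\to\infty$. Granting such sharpness, the area theorem forces the transition to occur exactly at the channel of capacity $R_m$; consequently, at any fixed channel of capacity $C > R$ one has $h(t)\approx 0$, so $\frac{1}{n}\sum_i H(X_i\mid Y^n)\to 0$ and the bit-MAP error vanishes. For the erasure channel $h(t)$ has a transparent combinatorial reading (the probability that coordinate $i$ is not recovered from a random erasure pattern), and sharpness follows from sharp-threshold theorems for monotone, transitive Boolean properties (the influence / Bourgain--Kalai machinery, as used by Kudekar et al.\ \cite{kud1}). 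Finally one attends to the boundary bookkeeping: the choice of $r_m$ in \eqref{eq:rmval} guarantees $R_m\to R$, and since the argument runs for each $R < C$, the whole open interval $[0,C)$ is achieved.

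The main obstacle is upgrading the sharp-threshold step to a \emph{general} BMS channel, where $h(t)$ no longer admits the erasure interpretation and the Boolean-influence toolbox does not apply directly. I would expect to work with the GEXIT/mutual-information quantity itself and show that its derivative along the interpolation is controlled by a symmetric two-point (pairwise) correlation, which must be small unless the curve is already near its transition; double transitivity then converts a single large ``influence'' into many equal ones, forcing the transition window to be narrow. Making this quantitative---bounding the derivative of the extrinsic mutual information by pairwise correlations and amplifying it through the symmetry---is the delicate technical heart of the argument, and is precisely where the general-BMS result must go beyond the clean BEC analysis.
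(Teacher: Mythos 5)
This statement is not proved in the paper at all: it is imported verbatim as Theorem~1 of the cited work of Reeves and Pfister, so there is no in-paper argument to compare against. What you have written is a roadmap of the known proof strategy from the literature (the EXIT/GEXIT area theorem, double transitivity of the affine automorphism group, and a sharp-threshold step), and as a roadmap it is accurate for the BEC case, where Kudekar et al.'s argument via Boolean influences does close the loop. But as a proof of the stated theorem it has a genuine gap, and you have located it yourself: the sharp-threshold step for a general BMS channel is precisely the content of the Reeves--Pfister paper, and your sketch does not supply it. Saying that the GEXIT derivative ``must be small unless the curve is already near its transition'' and that double transitivity ``converts a single large influence into many equal ones'' is a statement of what needs to be proved, not an argument for it.

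Moreover, the specific mechanism you propose is not quite the one that works. Double transitivity lets you rewrite the variance of the extrinsic posterior mean as an average two-point correlation $\frac{1}{n-1}\sum_{j \ne i} \mathbb{E}\left[\mathrm{Cov}(X_i, X_j \mid Y_{\sim i})\right]$, but it does not by itself force that average to vanish --- whether double transitivity alone suffices for general BMS channels is, to the best of current knowledge, open. The Reeves--Pfister argument additionally exploits the recursive (Plotkin/nesting) structure of RM codes, comparing the extrinsic MMSE of $\mathrm{RM}(m,r_m)$ with that of shorter RM codes on degraded channels to set up a recursion that drives the correlation, and hence the transition width, to zero. Without that RM-specific ingredient (or some substitute for it), the ``delicate technical heart'' you flag remains unproven, so the proposal does not constitute a proof of the theorem; it is a correct but incomplete description of where one would have to go.
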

As an example, Theorem \ref{thm:Reeves} implies that for the unconstrained BEC (resp. unconstrained BSC) with erasure probability $\epsilon \in (0,1)$ (resp. crossover probability $p\in (0,0.5)\cup (0.5,1)$), the sequence of codes $\{\mathcal{C}_m(1-\epsilon)\}_{m\geq 1}$ (resp. $\{\mathcal{C}_m(1-h_b(p))\}_{m\geq 1}$) achieves a rate of $1-\epsilon$ (resp. a rate of $1-h_b(p)$). Hence, the families of codes described above 
are $(0,\infty)$-capacity-achieving. 

Our idea is to construct a sequence of subcodes of $\{\mathcal{C}_m(R)\}$ that respect the $(d,\infty)$-RLL input-constraint, and analyze the rate of the chosen subcodes. We obtain the following result:

\begin{theorem}
	\label{thm:rm}
	For any $R \in (0,C)$, there exists a sequence of codes $\{\mathcal{C}_{m}^{(d,\infty)}(R)\}$, where $\mathcal{C}_{m}^{(d,\infty)}(R) \subset \mathcal{C}_m(R)$, which achieves a rate of $\frac{R}{2^{\left \lceil \log_2(d+1)\right \rceil}}$, over a $(d,\infty)$-RLL input-constrained BMS channel.
\end{theorem}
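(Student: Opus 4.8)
The plan is to exhibit, inside each code $\mathcal C_m(R) = \text{RM}(m,r_m)$, an explicit linear subcode all of whose codewords automatically obey the $(d,\infty)$-RLL constraint, and then to argue that the bit-MAP decoder for the full code $\mathcal C_m(R)$ already decodes this subcode with vanishing bit-error probability. Writing $s := \lceil \log_2(d+1)\rceil$, so that $2^s \ge d+1$, and letting $\pi_s := \prod_{j=m-s+1}^{m} x_j$ be the product of the $s$ least-significant coordinate variables, I would define
\[
\mathcal C_m^{(d,\infty)}(R) := \{\text{Eval}(\pi_s\cdot f) : f\in \mathbb F_2[x_1,\ldots,x_{m-s}],\ \deg(f)\le r_m - s\},
\]
for $m$ large enough that $r_m \ge s$ (and as the trivial zero code otherwise, which is immaterial asymptotically). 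Since $f$ involves only $x_1,\ldots,x_{m-s}$ while $\pi_s$ involves only $x_{m-s+1},\ldots,x_m$, the product is multilinear of degree exactly $s+\deg(f)\le r_m$, so $\mathcal C_m^{(d,\infty)}(R)\subset \mathcal C_m(R)$.

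First I would verify the constraint. Under the lexicographic ordering, $\text{Eval}_{\mathbf z}(\pi_s) = z_{m-s+1}\cdots z_m$ equals $1$ only when the $s$ least-significant bits of $\mathbf z$ are all $1$, i.e.\ only at coordinates whose (one-indexed) position is a multiple of $2^s$. Hence every codeword $\text{Eval}(\pi_s f)$ is supported on a subset of these positions, so any two of its $1$s are separated by at least $2^s - 1 \ge d$ coordinates that are forced to $0$; this is exactly the encoder requirement of Definition \ref{def:ach}. For the rate, the map $f\mapsto \pi_s f$ is linear and injective (restricting $\text{Eval}(\pi_s f)$ to the positions where $\pi_s=1$ recovers $\text{Eval}(f)$ on $\mathbb F_2^{m-s}$), so $\dim \mathcal C_m^{(d,\infty)}(R) = \binom{m-s}{\le r_m-s}$ and the design rate is $\binom{m-s}{\le r_m-s}/2^m$. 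Writing this as $2^{-s}\cdot P(\text{Bin}(m-s,1/2)\le r_m-s)$ and recalling $r_m = \lfloor m/2 + (\sqrt m/2)\,Q^{-1}(1-R)\rfloor$, the standardized threshold $(r_m - s - (m-s)/2)/(\sqrt{m-s}/2)$ converges to $Q^{-1}(1-R)$, so by the central limit theorem this probability tends to $1-Q(Q^{-1}(1-R)) = R$. Thus the rate converges to $R\cdot 2^{-s} = R/2^{\lceil \log_2(d+1)\rceil}$, as claimed.

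The main obstacle is the decoding claim: $\mathcal C_m^{(d,\infty)}(R)$ is exponentially smaller than $\mathcal C_m(R)$, so one cannot simply bound its bit-error by the full code's error times the index $[\mathcal C_m(R):\mathcal C_m^{(d,\infty)}(R)]$, which is exponentially large. The resolution I would use is the symmetry of a linear code over a BMS channel: writing $\mathcal E_i(\mathbf c)$ for the probability that the full-code bit-MAP decoder errs on coordinate $i$ when $\mathbf c$ is transmitted, I would show $\mathcal E_i(\mathbf c)$ is the same for every $\mathbf c\in \mathcal C_m(R)$. Concretely, for codewords $\mathbf c,\mathbf c'$ with $\mathbf e := \mathbf c'-\mathbf c \in \mathcal C_m(R)$, the sign-flip $y_j\mapsto (-1)^{e_j}y_j$ is a measure-preserving bijection of channel outputs (using $P(y\mid 1)=P(-y\mid 0)$) under which the posterior satisfies $P(X_i=x\mid y') = P(X_i=x+e_i\mid y)$; hence the decoder's estimate shifts by $e_i$, and since $c_i' = c_i + e_i$, the event $\{\hat X_i\neq c_i\}$ is carried exactly onto $\{\hat X_i\neq c_i'\}$, giving $\mathcal E_i(\mathbf c')=\mathcal E_i(\mathbf c)$.

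Consequently, the bit-error probability of $\mathcal C_m^{(d,\infty)}(R)$ under the full-code bit-MAP decoder, namely $\frac1n\sum_i \frac{1}{|\mathcal C_m^{(d,\infty)}(R)|}\sum_{\mathbf c\in \mathcal C_m^{(d,\infty)}(R)}\mathcal E_i(\mathbf c)$, equals $\frac1n\sum_i \mathcal E_i(\mathbf 0)$, which is precisely the full-code $P_b^{(n)}$ and tends to $0$ by Theorem \ref{thm:Reeves}. This establishes that $\{\mathcal C_m^{(d,\infty)}(R)\}$ is a sequence of $(d,\infty)$-RLL subcodes of $\{\mathcal C_m(R)\}$ achieving rate $R/2^{\lceil\log_2(d+1)\rceil}$ under bit-MAP decoding, completing the proof. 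The two places demanding care are the asymptotic evaluation of the truncated binomial sum (handled by the central limit theorem, exactly as in the rate analysis underlying Theorem \ref{thm:Reeves}) and the invariance $\mathcal E_i(\mathbf c)\equiv\mathcal E_i(\mathbf 0)$, which is the conceptual crux that prevents the exponential index from degrading the error probability.
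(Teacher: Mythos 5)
Your proposal is correct and follows essentially the same route as the paper: the same subcode $\{\mathrm{Eval}(\pi_s\cdot f)\}$ generated by the degree-$s$ monomial in the last $s$ variables, the same verification that its support lies only on every $2^s$-th lexicographic coordinate (so consecutive $1$s are separated by $2^s-1\ge d$ zeros), and the same asymptotic rate computation $\binom{m-s}{\le r_m-s}/2^m \to 2^{-s}R$, which the paper packages as Lemma~\ref{lem:rate} via the central limit theorem. The only difference is that you additionally spell out the symmetry argument showing that the full-code bit-MAP error probability is codeword-independent, so restricting transmission to the (exponentially smaller) subcode does not degrade the error; the paper leaves this step implicit in its definition of achievability for subcodes of linear codes, and your argument for it is standard and correct.
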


The proof of Theorem \ref{thm:rm} is provided in Section \ref{sec:rm}. 
Theorem \ref{thm:rm} states that for the $(d,\infty)$-RLL input-constrained BEC, a rate of $\frac{1}{d+1}(1-\epsilon)$ is achievable when $d=2^t-1$, for some $t\in \mathbb{N}$, and a rate of $\frac{1}{2(d+1)}(1-\epsilon)$ is achievable, otherwise. We note, however, that using random coding arguments, or using the techniques in \cite{ZW88} or \cite{arnk20}, it holds that a rate of $C_0^{(d)}(1-\epsilon)$ is achievable over the  $(d,\infty)$-RLL input-constrained BEC, where $C_0^{(d)}$ is the noiseless capacity of the input constraint (for example, $C_0^{(1)}\approx 0.6942$ and $C_0^{(2)}\approx 0.5515$). 
For the $(d,\infty)$-RLL input-constrained BSC, similarly, a rate of $\frac{1}{d+1}(1-h_b(p))$ is achievable when $d=2^t-1$, for some $t\in \mathbb{N}$, and a rate of $\frac{1}{2(d+1)}(1-h_b(p))$ is achievable, otherwise. Such a result is in the spirit of, but is weaker than, the conjecture by Wolf \cite{Wolf} that a rate of $C_0^{(d)}(1-h_b(p))$ is achievable over the $(d,\infty)$-RLL input-constrained BSC.


We now discuss a theorem that provides an upper bound on the largest rate achievable over a $(1,\infty)$-RLL input-constrained BMS channel, using subcodes of the sequence $\{\mathcal{C}_m(R) = \text{RM}(m,r_m)\}$, where $r_m$ is as in \eqref{eq:rmval}. 
Let $\mathcal{H}_{(1,\infty)}^{(m)}$ denote the largest subcode of ${\mathcal{C}}_m(R)$, all of whose codewords respect the $(1,\infty)$-RLL constraint. We then define
\[
\mathsf{R}^{(1,\infty)}_{{\mathcal{C}}}(R):=\limsup_{m\to \infty}\frac{\log_2|\mathcal{H}_{(1,\infty)}^{(m)}|}{2^m},
\]
to be the largest rate achieved by $(1,\infty)$-RLL subcodes of $\{{\mathcal{C}}_m(R)\}$, assuming that the ordering of the co-ordinates of the code is according to the standard lexicographic ordering. Then,
\begin{theorem}
	\label{thm:rmub}
	For the sequence of codes $\{{\mathcal{C}}_m(R) = \text{RM}(m,r_m)\}$, with $r_m$ as in \eqref{eq:rmval}, which achieve a rate $R$ over an unconstrained BMS channel, it holds that:
	\[
		\mathsf{R}^{(1,\infty)}_{{\mathcal{C}}}(R)\leq \frac{3R}{8}+\frac{1}{2}\ln \left(\frac{1}{1-R}\right),
		\]
		for $R\in (0,R^*)$, where $R^*\approx 0.37$ is the solution to: $\ln \left(\frac{1}{1-R}\right)  = \frac{5R}{4}$. For $R\geq R^*$, the trivial upper bound of $\mathsf{R}^{(1,\infty)}_{{\mathcal{C}}}(R)\leq R$ holds.
\end{theorem}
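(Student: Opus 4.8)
\emph{Plan.} The first step is to exploit the lexicographic ordering. Writing the $2^m$ coordinates as the integers $0,1,\dots,2^m-1$ in binary (with $x_1$ most significant), the consecutive pair $(2k,2k+1)$ corresponds to the two evaluation points that agree in $x_1,\dots,x_{m-1}$ and differ only in the last variable $x_m$. Using the Plotkin-type decomposition $f=g+x_m h$ with $g\in\text{RM}(m-1,r_m)$ and $h\in\text{RM}(m-1,r_m-1)$, the two entries of $\text{Eval}(f)$ on such a pair are $\big(g(\mathbf z'),(g+h)(\mathbf z')\big)$, for $\mathbf z'\in\mathbb{F}_2^{m-1}$. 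The $(1,\infty)$-RLL constraint forbids two adjacent $1$s, so on each such pair we need not both entries equal to $1$, which happens exactly when $g(\mathbf z')=1$ and $h(\mathbf z')=0$. Hence a necessary condition for membership in $\mathcal{H}_{(1,\infty)}^{(m)}$ is $\text{supp}(g)\subseteq\text{supp}(h)$. Since I only impose the constraints coming from the $2^{m-1}$ pairs $(2k,2k+1)$ and discard those from the pairs $(2k+1,2k+2)$, this is a relaxation and therefore yields a valid \emph{upper} bound on $|\mathcal{H}_{(1,\infty)}^{(m)}|$.

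The second step is to count the relaxed set. For each $h\in\text{RM}(m-1,r_m-1)$, the admissible $g$ form the linear space $\text{RM}(m-1,r_m)\cap\{v:\text{supp}(v)\subseteq\text{supp}(h)\}$; write its dimension as $d_h$. Then
\[
|\mathcal{H}_{(1,\infty)}^{(m)}|\ \le\ \sum_{h\in\text{RM}(m-1,r_m-1)} 2^{d_h},
\]
and I would group the summands by $w=\text{wt}(h)$ using the weight distribution $A_{m-1,r_m-1}(w)$. The elementary estimate available here is $d_h\le\min\!\big(\binom{m-1}{\le r_m},\,w\big)$, from the two obvious containments of the intersection.

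The third step is asymptotic. Normalizing the relative weight as $\omega=w/2^{m-1}$ and invoking the same central-limit concentration that underlies \eqref{eq:rmval} (so that both $\binom{m-1}{\le r_m}/2^{m-1}$ and $\binom{m-1}{\le r_m-1}/2^{m-1}$ tend to $R$), the sum is governed by a single exponential rate, giving
\[
\mathsf{R}^{(1,\infty)}_{\mathcal{C}}(R)\ \le\ \tfrac12\max_{\omega\in[0,1]}\big[\alpha(\omega)+\beta(\omega)\big],
\]
where $\tfrac12\alpha(\omega)$ is the normalized exponent of $A_{m-1,r_m-1}(\omega\,2^{m-1})$ and $\tfrac12\beta(\omega)$ is the normalized bound on $d_h$. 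I would then substitute the available upper bounds on the RM weight distribution (the results surveyed in \cite{rm_survey} and obtained in \cite{abbe1,kaufman,sberlo}) and carry out the maximization; the two contributions should combine into $\frac{3R}{8}+\frac12\ln\!\big(\frac{1}{1-R}\big)$, with $R^*$ emerging precisely as the value at which this expression crosses the trivial bound $\mathsf{R}\le R$ (equivalently $\ln\frac{1}{1-R}=\frac{5R}{4}$), beyond which the refined estimate is vacuous.

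The hard part will be the third step, and specifically obtaining a bound on $d_h$ that is tight enough as a function of $\text{wt}(h)$. The naive estimate $d_h\le\min(\binom{m-1}{\le r_m},w)$ only reproduces the trivial rate $R$, because codewords $h$ of relative weight near $\tfrac12$ are exponentially numerous yet still admit $d_h$ close to $\binom{m-1}{\le r_m}$ under this estimate. To beat $R$ one must show that, for most $h$ of a given weight, very few nonzero RM codewords are supported inside $\text{supp}(h)$ — that is, one must control the dimension of $\text{RM}(m-1,r_m)$ shortened to the support of a codeword. This is exactly where the fine structure of the RM weight distribution enters; it is the most delicate and most novel ingredient, and it is also what confines the clean statement to the $(1,\infty)$ constraint and to the range $R<R^*$.
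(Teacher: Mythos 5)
Your overall architecture coincides with the paper's: the decomposition $f = g + x_m\, h$ with $\text{Eval}(g)\in\text{RM}(m-1,r_m)$, $\text{Eval}(h)\in\text{RM}(m-1,r_m-1)$; the observation that under the lexicographic order the $(1,\infty)$ constraint restricted to the pairs $(2k,2k+1)$ forces $\text{supp}(\text{Eval}(g))\subseteq\text{supp}(\text{Eval}(h))$ (the paper's Lemma~\ref{lem:necesscond}); and the plan of bounding the number of admissible pairs by combining a weight-distribution estimate with a dimension bound for RM codes restricted to a support set. One cosmetic difference: the paper sums over $g$ and, for each $g$ of weight $w$, bounds the number of compatible $h$ by the size of the code shortened on $\text{supp}(\text{Eval}(g))$; you sum over $h$ and count $g$ supported inside $\text{supp}(\text{Eval}(h))$, i.e., you shorten on the \emph{complement} of $\text{supp}(\text{Eval}(h))$. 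By the symmetry of the pair count these yield the same exponent, but in your direction the dominant contribution comes from \emph{high}-weight $h$, so you would additionally need the symmetry $A(w)=A(n-w)$ of the weight distribution (which holds because $\mathbf{1}$ is a codeword) to transfer the low-weight bound to that regime.

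The genuine gap is precisely the ingredient you defer to "the hard part": a nontrivial bound on $d_h$ as a function of the (co-)support size. As you correctly observe, $d_h\le\min\bigl(\binom{m-1}{\le r_m},w\bigr)$ only reproduces the trivial rate $R$. The paper closes this with Lemma~\ref{lem:short} (Lemma 36 of \cite{abbe1}): for \emph{any} $\mathcal{V}\subseteq\mathbb{F}_2^m$ with $|\mathcal{V}|\ge 2^{m-u}$, the generator-matrix columns indexed by $\mathcal{V}$ have rank exceeding $\binom{m-u}{\le r}$, so by rank--nullity the code shortened on $\mathcal{V}$ has dimension less than $\binom{m}{\le r}-\binom{m-u}{\le r}$. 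Note that this is a statement about arbitrary coordinate sets, not about supports of codewords, so it is \emph{not} where the fine structure of the weight distribution enters; the weight distribution plays a separate role, namely counting codewords in each weight class via the specific bound $A(w)\le 2^{o(n)}\exp_2\bigl(2\ln\bigl(\tfrac{1}{1-R}\bigr)w\bigr)$ of Lemma~\ref{lem:sam}. With these two inputs one partitions the weights into dyadic classes $w\in[2^{m-2-i},2^{m-1-i}]$, uses Lemma~\ref{lem:rate} (uniformly in $i$ up to $m^{1/3}$, via Berry--Esseen) to evaluate the binomial sums, obtains a normalized per-class exponent of $\tfrac{R}{2}+2^{-i}\bigl(\ln\tfrac{1}{1-R}-\tfrac{R}{4}\bigr)$, and observes that this is maximized at $i=1$, which gives $\tfrac{3R}{8}+\tfrac12\ln\tfrac{1}{1-R}$ and, by comparison with the trivial bound $R$, the threshold $R^*$. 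Without the shortened-code rank bound, your plan stalls exactly where you say it does, so as written it establishes only the trivial bound.
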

The proof of the theorem is taken up in Section \ref{sec:rmub}. Figure \ref{fig:ubplot} shows a comparison between the upper bound in Theorem \ref{thm:rmub} and the lower bound from Theorem \ref{thm:rm}.

\begin{figure*}[!t]
	\centering
	\includegraphics[width = 0.8\textwidth]{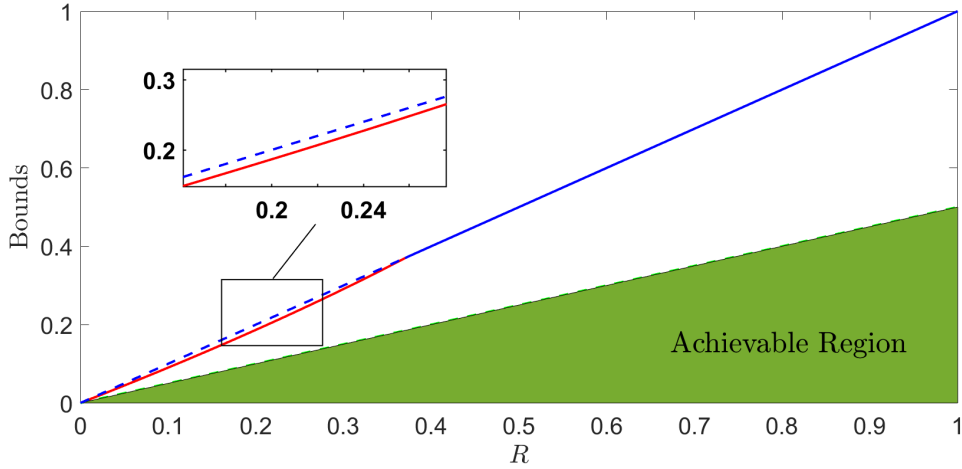}
	\caption{A comparison between the upper bound and achievable rate, using subcodes of RM codes, computed for the $(1,\infty)$-RLL input-constrained BEC. The upper bound is shown by the thick red and blue curves, while the achievable region from Theorem \ref{thm:rm} is shown in green. A comparison between the upper bound in Theorem \ref{thm:rmub} and the trivial upper bound of $R$ is shown in the inset, for $R\in (0,R^*)$.}
	\label{fig:ubplot}
\end{figure*}
We end this section with a couple of remarks. Firstly, note that the all-ones codeword $\mathbf{1}$ belongs to the RM code. Since any codeword $\mathbf{c}$ that respects the $(0,1)$-RLL constraint can be written as $\mathbf{c} = \mathbf{1}+\mathbf{\hat{c}}$, where $\mathbf{\hat{c}}$ respects the $(1,\infty)$-RLL constraint, the lower bound of Theorem \ref{thm:rm} and the upper bound of Theorem \ref{thm:rmub} hold for the rate of $(0,1)$-RLL subcodes as well. Moreover, since for any $k>1$, a $(0,1)$-RLL subcode of an RM code is a subset of a $(0,k)$-RLL subcode, the lower bound of $R/2$ from Theorem \ref{thm:rm} holds for $(0,k)$-RLL subcodes of RM codes as well.

Secondly, in this work, we are concerned with $(d,\infty)$- or $(0,k)$-RLL subcodes of a linear (Reed-Muller) code. We point out that the dual problem of identifying linear codes that are subsets of the set of $(d,\infty)$- or $(0,k)$-RLL sequences of a fixed length, has also been studied \cite{lechner}. The results there can be suitably extended to show that the rate of the largest linear code within the set of $(d,\infty)$- or $(0,k)$-RLL sequences of length $n$, equal, respectively, $\frac{1}{d+1}$ and $\frac{k}{k+1}$, as $n\to \infty$. However, such a result offers no insight into rates achievable over BMS channels.
\section{Achievable Rates}
\label{sec:rm}

As mentioned in Section \ref{sec:main}, we consider the Reed-Muller (RM) family of codes, $\{\mathcal{C}_m(R)\}$, which are such that any rate $R\in [0,C)$ is $(0,\infty)$-achievable (see Definition \ref{def:ach}) over the BMS channel (see Theorem \ref{thm:Reeves}), where $C$ denotes the capacity of the channel. We then select subcodes that respect the $(d,\infty)$-RLL constraint, of such a capacity-achieving sequence of RM codes, and compute their rate. For notational convenience, we denote by $S_{(d,\infty)}^{(n)}$, the set of $n$-length binary vectors that respect the $(d,\infty)$-RLL constraint, i.e.,
\begin{align*}
S_{(d,\infty)}^{(n)}:=\bigl\{\mathbf{c} = (c_0,\ldots,c_{n-1}): \mathbf{c}&\text{ respects the $(d,\infty)$-RLL}\\ &\text{ input constraint}\bigr\}.
\end{align*}
We suppress the superscript `$(n)$' if the length of the vector is clear from the context.

We begin with a simple observation, presented below as a lemma. We recall that the support of a vector $\mathbf{c}\in \mathbb{F}_2^{n}$,
\[
\text{supp}(\mathbf{c}) = \{i: c_i = 1\}.
\]

\begin{lemma}
	\label{lem:simplerm}
	Given $d\geq 1$, if $\hat{\mathbf{c}}$ is such that $\hat{\mathbf{c}}\in S_{(d,\infty)}$, and supp$(\mathbf{c}) \subseteq \text{supp}(\hat{\mathbf{c}})$, then it holds that $\mathbf{c}\in S_{(d,\infty)}$.
\end{lemma}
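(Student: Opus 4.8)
The plan is to prove this by a direct monotonicity argument: passing from $\hat{\mathbf{c}}$ to a vector $\mathbf{c}$ with $\text{supp}(\mathbf{c}) \subseteq \text{supp}(\hat{\mathbf{c}})$ only deletes $1$s (replaces some $1$s by $0$s), and deleting $1$s can only widen the gaps between successive $1$s, hence cannot create a violation of the $(d,\infty)$ constraint.

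First I would reformulate the constraint in terms of gaps between support positions: a vector $\mathbf{v}$ lies in $S_{(d,\infty)}$ if and only if any two \emph{consecutive} elements $i < j$ of $\text{supp}(\mathbf{v})$ (that is, positions with $v_i = v_j = 1$ and no $1$ strictly between them) satisfy $j - i \geq d+1$, equivalently there are at least $d$ zeros between them. This matches the encoder condition in Definition \ref{def:ach}. If $\text{supp}(\mathbf{c})$ has fewer than two elements the claim is vacuous, so I may assume otherwise.

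Next I would fix an arbitrary pair of consecutive support positions $i < j$ of $\mathbf{c}$. Since $\text{supp}(\mathbf{c}) \subseteq \text{supp}(\hat{\mathbf{c}})$, both $i$ and $j$ lie in $\text{supp}(\hat{\mathbf{c}})$. Let $i'$ be the smallest element of $\text{supp}(\hat{\mathbf{c}})$ strictly greater than $i$; this is well-defined because $j$ is one such element, and by minimality there is no $1$ of $\hat{\mathbf{c}}$ strictly between $i$ and $i'$, so $i$ and $i'$ are \emph{consecutive} support positions of $\hat{\mathbf{c}}$ with $i' \leq j$. Applying the hypothesis $\hat{\mathbf{c}} \in S_{(d,\infty)}$ to this pair gives $i' - i \geq d+1$, whence $j - i \geq i' - i \geq d+1$. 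As $i,j$ were arbitrary, every gap between consecutive $1$s of $\mathbf{c}$ is at least $d+1$, so $\mathbf{c} \in S_{(d,\infty)}$.

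I do not anticipate a genuine obstacle here; the content is a one-line monotonicity fact, and the only care needed is bookkeeping with indices and the notion of ``consecutive'' support positions. The single subtlety worth flagging is the well-definedness of the intervening position $i'$ and the verification that $i,i'$ are genuinely consecutive $1$s of $\hat{\mathbf{c}}$ (so that the constraint on $\hat{\mathbf{c}}$ applies to them), together with the degenerate case where $\mathbf{c}$ has at most one $1$. An equally short alternative would be the contrapositive: a violation in $\mathbf{c}$ supplies two $1$s at distance at most $d$ that, by the support containment, are also $1$s of $\hat{\mathbf{c}}$, forcing some consecutive pair of $1$s of $\hat{\mathbf{c}}$ within that short window to violate the constraint as well.
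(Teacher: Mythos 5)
Your argument is correct and complete: the reduction to consecutive support positions, the choice of the intervening position $i'$, and the inequality $j - i \geq i' - i \geq d+1$ together give a clean formalization of the monotonicity fact. The paper itself offers no proof of this lemma (it is stated as ``a simple observation''), so there is no alternative route to compare against; your write-up is exactly the argument the authors implicitly rely on.
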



Another fact that we repeatedly use is recorded in Lemma~\ref{lem:rate} below. For this, recall our definition of $r_m$ from \eqref{eq:rmval}.

\begin{lemma}
\label{lem:rate}
For any sequence of positive integers $(t_m)$ such that $t_m = o(\!\!\sqrt{m})$, we have
$$
\lim_{m \to \infty} \frac{1}{2^{m-t_m}} \binom{m-t_m}{\le r_m} \ = \ R.
$$
In particular, for any fixed integer $t > 0$, $\lim\limits_{m \to \infty} \frac{1}{2^m} \binom{m-t}{\le r_m} = 2^{-t}R$.
\end{lemma}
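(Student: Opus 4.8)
The plan is to interpret each quantity $2^{-N}\binom{N}{\le r}$ probabilistically: it equals $\Pr[S_N \le r]$, where $S_N$ is a sum of $N$ independent $\mathrm{Bernoulli}(1/2)$ random variables, i.e.\ the CDF of a $\mathrm{Binomial}(N,1/2)$ law evaluated at $r$. With $N = m - t_m$ and $r = r_m$, the left-hand side of the claimed limit is exactly $\Pr[S_{m-t_m}\le r_m]$, so the statement becomes a central-limit assertion about how this probability behaves as $m\to\infty$. This reframing is what makes the Gaussian $Q^{-1}(1-R)$ appearing in \eqref{eq:rmval} interact cleanly with the computation.

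First I would invoke the Berry--Esseen theorem to replace the binomial CDF by the standard normal CDF $\Phi := 1-Q$, \emph{uniformly} in the threshold. For $\mathrm{Bernoulli}(1/2)$ summands the variance is $1/4$ and the third absolute central moment is $1/8$, so Berry--Esseen furnishes an absolute constant $K$ with $\left|\Pr[S_N \le r] - \Phi\!\big((r - N/2)/(\sqrt{N}/2)\big)\right| \le K/\sqrt{N}$ for every integer threshold $r$ and every $N$. Since $t_m = o(\sqrt m)$ forces $m - t_m \to \infty$, the error $K/\sqrt{m-t_m}$ vanishes, and it suffices to analyse the normalized argument $a_m := \big(r_m - (m-t_m)/2\big)\big/\big(\sqrt{m-t_m}/2\big)$.

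Next I would track $a_m$ from the definition \eqref{eq:rmval}. For $m$ large the max is inactive (the dominant term $m/2$ keeps the expression positive) and the floor contributes an $O(1)$ error, so $r_m = m/2 + \tfrac{\sqrt m}{2}Q^{-1}(1-R) + O(1)$. Substituting and dividing through by $\tfrac{\sqrt m}{2}$ gives
$$ a_m = \frac{\tfrac{\sqrt m}{2}\,Q^{-1}(1-R) + \tfrac{t_m}{2} + O(1)}{\tfrac{\sqrt m}{2}\,\sqrt{1 - t_m/m}} = \frac{Q^{-1}(1-R) + t_m/\sqrt m + O(1/\sqrt m)}{\sqrt{1 - t_m/m}}. $$
The hypothesis $t_m = o(\sqrt m)$ makes $t_m/\sqrt m \to 0$ and a fortiori $t_m/m \to 0$, so $a_m \to Q^{-1}(1-R)$. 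By continuity of $\Phi$ and the identity $\Phi(Q^{-1}(1-R)) = 1 - Q(Q^{-1}(1-R)) = R$, the normal term converges to $R$, and combining with the Berry--Esseen bound yields $2^{-(m-t_m)}\binom{m-t_m}{\le r_m} \to R$. The ``in particular'' claim then follows by taking the constant sequence $t_m \equiv t$ (trivially $o(\sqrt m)$) and multiplying through by $2^{-t}$, since $2^{-m}\binom{m-t}{\le r_m} = 2^{-t}\cdot 2^{-(m-t)}\binom{m-t}{\le r_m}$.

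The main obstacle --- indeed essentially the only nonroutine point --- is that $r_m$ drifts with $m$, so the Gaussian approximation must hold uniformly in the threshold; the pointwise De Moivre--Laplace limit is insufficient, and this is precisely what Berry--Esseen supplies. Everything else (handling the floor and max in \eqref{eq:rmval}, and expanding $a_m$) is elementary. As a sanity check, the $t_m\equiv 0$ specialization reproduces the known convergence of the unconstrained rate $R_m\to R$ quoted from \cite{kud1}.
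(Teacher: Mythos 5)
Your proof is correct and follows essentially the same route as the paper's: both rewrite $2^{-(m-t_m)}\binom{m-t_m}{\le r_m}$ as $\Pr[S_{m-t_m}\le r_m]$ for a $\mathrm{Bin}(m-t_m,\tfrac12)$ variable and pass to the normal limit, using $t_m = o(\!\!\sqrt{m})$ to show the normalized threshold tends to $Q^{-1}(1-R)$. The only (cosmetic) difference is that you invoke Berry--Esseen to get uniformity in the moving threshold, whereas the paper sandwiches $r_m$ between $r_{m-t_m}\pm\nu_m$ and appeals to the de Moivre--Laplace CLT; your version makes the uniformity issue slightly more airtight but is the same argument in substance.
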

\begin{proof}
Let $S_m$ denote a Bin$(m,\frac12)$ random variable, and note that $\frac{1}{2^{m-t_m}} \binom{m-t_m}{\le r_m}$ equals $\Pr[S_{m-t_m} \le r_m]$. Further note that by our choice of $r_m$, for any integer $t > 0$, we have for all $m$ large enough,
\begin{align}
	|r_m - r_{m-t}| &\leq \biggl| \frac{m}{2} + \frac{\sqrt{m}}{2}Q^{-1}(1-R) \, - \notag\\
	   &\ \ \ \ \ \ \ \ \ \ \ \left(\frac{m-t}{2} + \frac{\sqrt{m-t}}{2}Q^{-1}(1-R)\right) + 1 \biggr| \notag\\
	&\leq \frac{t}{2}+\frac{\sqrt{t}}{2}\lvert Q^{-1}(1-R)\rvert + 1. \label{eq:rm_diff}
\end{align}
Hence, we have $r_{m-t_m} - \nu_m \le r_m \le r_{m-t_m} + \nu_m$, with $\nu_m := \frac{t_m}{2}+\frac{\sqrt{t_m}}{2}\lvert Q^{-1}(1-R)\rvert + 1$.
Consequently, $\Pr[S_{m-t_m} \le r_{m-t_m} - \nu_m] \le \Pr[S_{m-t_m} \le r_m] \le \Pr[S_{m-t_m} \le r_{m-t_m} + \nu_m]$. Setting $\overline{S}_{m-t_m} := \frac{S_{m-t_m} - \frac12(m-t_m)}{\frac12\sqrt{m-t_m}}$, we have 
\begin{align}
& \Pr[\overline{S}_{m-t_m} \le Q^{-1}(1-R) - \frac{\nu_m}{\frac12\sqrt{m-t_m}}] \notag \\
& \ \ \ \ \ \le \ \Pr[S_{m-t_m} \le r_m] \notag \\
& \ \ \ \ \ \ \ \ \ \ \ \le \ \Pr[\overline{S}_{m-t_m} \le Q^{-1}(1-R) + \frac{\nu_m}{\frac12\sqrt{m-t_m}}].
\label{sandwich}
\end{align}

Now, by the central limit theorem (or, in this special case, by the de Moivre-Laplace theorem), $\overline{S}_{m-t_m}$ converges in distribution to a standard normal random variable, $Z$. Therefore, via \eqref{sandwich} and the fact that $t_m$ and $\nu_m$ are both $o(\!\!\sqrt{m})$, we obtain that 
$$
\lim_{m \to \infty} \Pr[S_{m-t_m} \le r_m]  \ = \ \Pr[Z \le Q^{-1}(1-R)] \ = \ R,
$$
which proves the lemma.
\end{proof}

We are now in a position to prove Theorem~\ref{thm:rm}.

\begin{proof}[Proof of Theorem~\ref{thm:rm}]
	For a fixed $d\geq 1$, let $z := \left \lceil \log_2(d+1)\right \rceil$. Consider the subcode $\mathcal{C}_m^{(d,\infty)}(R)$, of the code $\mathcal{C}_m(R)$, defined as:
	\begin{align*}
	\mathcal{C}_m^{(d,\infty)}(R):=\Bigg\{\text{Eval}(f): f = &\bigg(\prod_{i=m-z+1}^{m} x_i \bigg)\cdot g(x_{1},\ldots, x_{m-z}),\\ &\text{ where } \text{deg}(g)\leq r_m-z\Bigg\}.
	\end{align*}
It is easy to verify that the polynomial $h(x_{m-z+1},\ldots,x_m):=\prod_{i=m-z+1}^{m} x_i$ is such that its corresponding evaluation vector, Eval$(h)$, obeys Eval$(h)\in S_{(d,\infty)}^{(2^m)}$. This is because Eval$_\mathbf{y}(h) = 1$ if and only if $(y_{m-z+1},\ldots,y_m) = (1,\ldots,1)$, and in the lexicographic ordering, such evaluation points $\mathbf{y}$ are spaced out by $2^z - 1$ coordinates, where $2^z-1\geq d$. Thus, for any polynomial $f$ such that $\text{Eval}(f)\in \mathcal{C}_m^{(d,\infty)}(R)$, it is true that supp$(\text{Eval}(f))\subseteq \text{supp}(\text{Eval}(h))$, so that Eval$(f) \in S_{(d,\infty)}^{(2^m)}$ via Lemma \ref{lem:simplerm}. Hence, for any $R\in (0,1)$, it holds that $\mathcal{C}_m^{(d,\infty)}(R)$ is a collection of RM codewords in $\mathcal{C}_m(R)$, all of which respect the input constraint.

Consequently, the rate of the subcode $\mathcal{C}_m^{(d,\infty)}(R)$ is
\begin{align*}
	R_m^{(d,\infty)} \ & =\ \frac{\log_2(|\mathcal{C}_m^{(d,\infty)}|)}{2^m} \\ 
	&= \ \frac{{m-z \choose \leq r_m-z}}{2^m} 
		\ = \ \frac{\binom{m-z}{\le r_m-z}}{\binom{m-z}{\le r_m}} \, \frac{{m-z \choose \leq r_m}}{2^m}
		\ \xrightarrow{m \to \infty} \ 2^{-z}R.
\end{align*}
 To obtain the limit as $m \to \infty$, we have used Lemma~\ref{lem:rate} and the fact that the ratio $\frac{\binom{m-z}{\le r_m-z}}{\binom{m-z}{\le r_m}}$ converges to $1$ as $m \to \infty$.
\end{proof}
\begin{remark}
	The achievable rate calculated in this paper supplements the results of \cite{pvk}, which show the existence of $(d,\infty)$- and $(0,k)$-RLL) subcodes of cosets of any $(0,\infty)$-capacity-achieving linear code, over a BMS channel with capacity $C$, of rate at least $C_0 + C-1$, where $C_0$ is the noiseless capacity of the input constraint. In particular, the paper shows the existence of codes of rate at least $C_0-\epsilon$, for the BEC$(\epsilon)$, and $C_0 - h_b(p)$, for the BSC$(p)$. However, the paper does not provide an explicit identification of such codes. We note in addition that our achievable rate exceeds the lower bounds in \cite{pvk}, for large $\epsilon$, over the BEC, and for $p$ close to $0.5$, over the BSC.
\end{remark}
In the context of the design of coding schemes over (input-constrained) BMS channels, it would be remiss to not comment on the rates achieved by polar codes, given that polar codes are capacity-achieving over a broad class of channels (see, for example, \cite{polar}, \cite{tal}, \cite{tal2}, and reference therein). Following the work of Li and Tan in \cite{litan}, and as pointed out earlier, it holds that the capacity without feedback of the class of input-constrained DMCs can be approached arbitrarily closely using stationary, ergodic Markov input distributions of finite order. Moreover, from \cite{tal}, it can be verified that polar codes do achieve the information rates over the BEC, of any stationary, ergodic finite-state Markov input process, that satisfy a certain ``forgetfulness'' property, a sufficient condition for which is given by Condition K in the paper. In particular, this shows that polar codes do achieve the capacity of the $(d,\infty)$- and $(0,k)$-RLL input-constrained BEC. However, this observation is not very helpful for the following reasons:
\begin{itemize}
	\item We do not possess knowledge of an optimal sequence of Markovian input distributions.
	\item Polar codes are (structured) random codes, and the exact choice of the bit-channels to send information bits over, requires analysis. In our construction, however, we have explicitly identified the codewords being transmitted.
	\item Finally, this observation does not lend insight into explicit lower bounds on non-feedback capacity.
\end{itemize}
\section{Upper Bounds}
\label{sec:rmub}
In this section, we provide upper bounds on the rates over a $(1,\infty)$-RLL input-constrained BMS channel, achievable using subcodes of $\{\mathcal{C}_m(R)\}$ (see \eqref{eq:rmval}), which is a $(0,\infty)$-capacity-achieving family of RM codes. We fix the co-ordinate ordering to be the standard lexicographic ordering. 

Following the expositions in \cite[Chap. 13]{mws} and \cite{rm_survey}, it holds that any Boolean polynomial $f \in \mathbb{F}_2[x_1,\ldots,x_m]$, such that Eval$(f) \in \text{RM}(m,r)$, can be expressed as:
\begin{equation}
	\label{eq:rmdecomp}
f(x_1,\ldots,x_m) = g(x_1,\ldots,x_{m-1}) + x_m\cdot h(x_1,\ldots,x_{m-1}), 
\end{equation}
where $g,h$ are such that Eval$(g)\in \text{RM}(m-1,r)$ and Eval$(h)\in \text{RM}(m-1,r-1)$. The following lemma then holds true:
\begin{lemma}
	\label{lem:necesscond}
	If a codeword Eval$(f)\in \text{RM}(m,r)$ is such that Eval$(f)\in S_{(1,\infty)}$, then, supp$(\text{Eval}(g)) \subseteq \text{supp}(\text{Eval}(h))$, where $g,h$ are as in \eqref{eq:rmdecomp}.
\end{lemma}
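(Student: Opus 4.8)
The plan is to exploit the fact that, under the standard lexicographic ordering, the final variable $x_m$ is the \emph{least} significant, so the evaluation vector $\text{Eval}(f)$ splits naturally into $2^{m-1}$ adjacent pairs of coordinates. First I would fix an assignment $\mathbf{a} = (a_1,\ldots,a_{m-1}) \in \mathbb{F}_2^{m-1}$ of the first $m-1$ variables and observe that the two evaluation points $(\mathbf{a},0)$ and $(\mathbf{a},1)$ occupy consecutive positions in $\text{Eval}(f)$, namely positions $2j$ and $2j+1$, where $\mathbf{a}$ is the $j$-th string of $\mathbb{F}_2^{m-1}$ in lex order. Using the decomposition \eqref{eq:rmdecomp}, the values at these two positions are precisely $f(\mathbf{a},0) = g(\mathbf{a})$ and $f(\mathbf{a},1) = g(\mathbf{a}) + h(\mathbf{a})$. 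Thus $\text{Eval}(f)$ interleaves $\text{Eval}(g)$ (at even positions) with $\text{Eval}(g)+\text{Eval}(h)$ (at odd positions).

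Next I would invoke the $(1,\infty)$-RLL constraint, which forbids the pattern $11$, i.e., no two adjacent coordinates of $\text{Eval}(f)$ can both equal $1$. Applying this to the pair at positions $2j, 2j+1$ gives: if $g(\mathbf{a}) = 1$, then necessarily $g(\mathbf{a}) + h(\mathbf{a}) = 0$, which over $\mathbb{F}_2$ forces $h(\mathbf{a}) = g(\mathbf{a}) = 1$. Since this holds for every $\mathbf{a} \in \mathbb{F}_2^{m-1}$, we conclude that $g(\mathbf{a}) = 1 \Rightarrow h(\mathbf{a}) = 1$ for all $\mathbf{a}$, which is exactly the containment $\text{supp}(\text{Eval}(g)) \subseteq \text{supp}(\text{Eval}(h))$.

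The only point requiring care---and the closest thing to an obstacle---is the verification that, under the lexicographic ordering with $x_1$ most significant and $x_m$ least significant, the pair $(\mathbf{a},0), (\mathbf{a},1)$ really does land on adjacent coordinates for each $\mathbf{a}$; this is where the specific choice of ordering is essential and should be spelled out explicitly. I would also note that only the \emph{within}-pair adjacencies are used here: the additional adjacencies straddling consecutive pairs (between $g(\mathbf{a})+h(\mathbf{a})$ and $g(\mathbf{a}')$ for the lex-successor $\mathbf{a}'$ of $\mathbf{a}$) yield further necessary conditions, but they are not needed for this one-directional lemma, which only asserts a necessary condition for membership in $S_{(1,\infty)}$.
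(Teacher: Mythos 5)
Your proposal is correct and uses essentially the same argument as the paper: both rely on the observation that $(\mathbf{a},0)$ and $(\mathbf{a},1)$ are adjacent under the lexicographic ordering and that $f(\mathbf{a},0)=g(\mathbf{a})$, $f(\mathbf{a},1)=g(\mathbf{a})+h(\mathbf{a})$. The paper phrases it as a contrapositive (a point with $g(\mathbf{z})=1$, $h(\mathbf{z})=0$ forces two consecutive $1$s), while you argue the implication directly, which is an immaterial difference.
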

\begin{proof}
	Suppose that there exists some evaluation point $\mathbf{z} = (z_1,\ldots,z_{m-1}) \in \mathbb{F}_2^{m-1}$ such that $g(\mathbf{z})=1$ and $h(\mathbf{z})=0$. Then it follows that at evaluation points $\mathbf{z}_1,\mathbf{z}_2 \in \mathbb{F}_2^m$ such that $\mathbf{z}_1 = \mathbf{z}0$ and $\mathbf{z}_1 = \mathbf{z}1$, it holds that $f(\mathbf{z}_1) = f(\mathbf{z}_2) = 1$. Since by construction the points $\mathbf{z}_1$ and $\mathbf{z}_2$ are consecutive in the lexicographic ordering, the codeword Eval$(f)\notin S_{(1,\infty)}$. 
\end{proof}
Now consider any sequence of RM codes, $\{{\mathcal{C}}_m(R)\}$, where ${\mathcal{C}}_m(R) = \text{RM}(m,r_m)$, where $r_m$ is as in \eqref{eq:rmval}. By Theorem \ref{thm:Reeves}, we know that the sequence $\{\mathcal{C}_m(R)\}$ achieves a rate of $R\in (0,C)$, where $C$ is the capacity of the unconstrained BMS channel. From Lemma \ref{lem:necesscond}, it follows that in order to obtain an upper bound on the number of codewords, Eval$(f) \in {\mathcal{C}}_m(R)$, that respect the $(1,\infty)$-RLL constraint, it is sufficient to obtain an upper bound on the number of polynomials $h$, for a given $g$, such that supp$(\text{Eval}(g)) \subseteq \text{supp}(\text{Eval}(h))$.

The following two lemmas from the literature will be useful in the proof of Theorem \ref{thm:rmub}.
\begin{lemma}[\cite{abbe1}, Lemma 36]
	\label{lem:short}
	Let $\mathcal{V} \subseteq \mathbb{F}_2^m$ be such that $|\mathcal{V}|\geq 2^{m-u}$, for some $u\in \mathbb{N}$. Then, it holds that
	\[
	\text{rank}(G(m,r)[\mathcal{V}])> {m-u \choose \leq r},
	\]
	where $G(m,r)$ is the generator matrix of the RM$(m,r)$ code, and $G(m,r)[\mathcal{V}]$ denotes the set of columns of $G(m,r)$, indexed by $\mathcal{V}$.
\end{lemma}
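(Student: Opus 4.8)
The plan is to induct on $m$ using the recursive structure of Reed--Muller codes recorded in \eqref{eq:rmdecomp}, after first reading $\text{rank}(G(m,r)[\mathcal{V}])$ as the dimension of $\text{RM}(m,r)$ punctured to the coordinate set $\mathcal{V}$. I would split the evaluation points by their last coordinate $x_m$: write $\mathcal{V} = \mathcal{V}_0 \cup \mathcal{V}_1$ (a disjoint union) according to whether $x_m = 0$ or $x_m = 1$, and let $\mathcal{W}_0, \mathcal{W}_1 \subseteq \mathbb{F}_2^{m-1}$ be the projections of $\mathcal{V}_0, \mathcal{V}_1$ onto the first $m-1$ coordinates. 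Ordering the rows of $G(m,r)$ by whether the underlying monomial is divisible by $x_m$ and using $f = g + x_m h$ from \eqref{eq:rmdecomp}, the matrix $G(m,r)[\mathcal{V}]$ becomes block upper-triangular, with $G(m-1,r)[\mathcal{W}_0]$ and $G(m-1,r-1)[\mathcal{W}_1]$ on its diagonal. The elementary block-triangular rank bound then yields
\[
\text{rank}(G(m,r)[\mathcal{V}]) \ \geq\ \text{rank}(G(m-1,r)[\mathcal{W}_0]) + \text{rank}(G(m-1,r-1)[\mathcal{W}_1]).
\]
Since Pascal's identity gives $\binom{m-u}{\leq r} = \binom{(m-1)-u}{\leq r} + \binom{(m-1)-u}{\leq r-1}$, the target splits in exactly the same way, so in the balanced case---when $|\mathcal{W}_0|$ and $|\mathcal{W}_1|$ are both at least $2^{(m-1)-u}$---the induction hypothesis applied to the two terms closes the argument.

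The step I expect to be the main obstacle is the unbalanced split: when one side, say $\mathcal{W}_1$, is small, its term contributes too few dimensions and the two-term bound can fall short of $\binom{m-u}{\leq r}$. The degenerate extreme $\mathcal{W}_1 = \emptyset$ is actually favorable, since then $|\mathcal{W}_0| = |\mathcal{V}| \geq 2^{m-u} = 2^{(m-1)-(u-1)}$ and the induction hypothesis applied to $\mathcal{W}_0$ alone, with $u$ replaced by $u-1$, already exceeds $\binom{m-u}{\leq r}$; it is the intermediate regime that a single coordinate split does not obviously control, because lower-bounding $|\mathcal{W}_0|$ and $|\mathcal{W}_1|$ separately wastes the joint constraint $|\mathcal{W}_0| + |\mathcal{W}_1| \geq 2^{m-u}$.

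To handle all splits uniformly I would instead prove the sharper statement that, among all $\mathcal{V}$ of a fixed size, $\text{rank}(G(m,r)[\mathcal{V}])$ is minimized by a down-set (an order ideal of the Boolean lattice, identifying points with their supports). The reduction is a monotone compression: replacing each $\mathbf{w}1 \in \mathcal{V}$ by $\mathbf{w}0$ whenever $\mathbf{w}0 \notin \mathcal{V}$ preserves $|\mathcal{V}|$ and, I claim, does not increase the rank; iterating over all coordinates produces a down-set. For a down-set $\mathcal{D}$ the problem becomes transparent: the entry of $G(m,r)$ at monomial $x_S$ and point with support $T$ is the subset indicator $[S \subseteq T]$, so the square submatrix of $G(m,r)[\mathcal{D}]$ on rows and columns indexed by $\{S \in \mathcal{D} : |S| \leq r\}$ is upper-triangular with unit diagonal, giving $\text{rank}(G(m,r)[\mathcal{D}]) \geq |\{S \in \mathcal{D} : |S| \leq r\}|$. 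It then remains to show, by a Kruskal--Katona-type extremal argument, that a down-set of size at least $2^{m-u}$ has more than $\binom{m-u}{\leq r}$ elements of size at most $r$, with the subcube $\{S : S \subseteq [m-u]\}$ as the boundary case. I expect the delicate point to be verifying that compression does not increase the rank---equivalently, that it does not decrease the dimension of the dual code $\text{RM}(m, m-1-r)$ whose support lies inside $\mathcal{V}$---while the triangularity observation and the extremal count are comparatively routine.
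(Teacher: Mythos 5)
The paper does not prove this lemma at all---it is imported verbatim as Lemma~36 of \cite{abbe1}---so there is no internal proof to compare against; I can only assess your argument on its own terms. Your overall strategy (compress $\mathcal{V}$ to a down-set without increasing the rank, use the unitriangularity of the inclusion matrix $[S \subseteq T]$ on a down-set to get $\text{rank} \geq |\{S \in \mathcal{D} : |S| \leq r\}|$, then count) is sound, and the step you flag as delicate is indeed true: if $f = g + x_j h$ has degree $\leq r$ and vanishes on $\mathcal{V}$, the linear map $f \mapsto x_j h$ lands in the space of degree-$\leq r$ polynomials vanishing on $C_j(\mathcal{V})$ (since $h = g + (g + x_j h)|_{x_j=1}$ vanishes on $\mathcal{V}_0 \cap \mathcal{V}_1$, which is the $x_j = 1$ slice of $C_j(\mathcal{V})$), while its kernel consists of polynomials $g$ in the remaining variables vanishing on $\mathcal{V}_0 \cup \mathcal{V}_1$, which already vanish on all of $C_j(\mathcal{V})$ and meet the image trivially; hence the vanishing space cannot shrink and the rank cannot grow. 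You should supply this argument rather than assert it, since it is the crux of the reduction.

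Two genuine issues remain. First, the final counting step is not ``comparatively routine'': the obvious induction on $m$ for the claim ``a down-set of size $\geq 2^{m-u}$ has $\geq \binom{m-u}{\leq r}$ elements of size $\leq r$'' splits $\mathcal{D}$ into $\mathcal{D}_0 \supseteq \mathcal{D}_1$ by the last coordinate and fails in precisely the unbalanced regime ($|\mathcal{D}_1| < 2^{m-1-u}$ but $|\mathcal{D}_0| < 2^{m-u}$) that defeated your first approach; closing it genuinely requires the Kruskal--Katona/Macaulay machinery for profiles of down-sets (or, equivalently, Wei's determination of the generalized Hamming weights of RM codes, which is what this lemma amounts to). Second, your own boundary case exposes that the strict inequality in the statement cannot be proved: for $\mathcal{V}$ a coordinate subcube of dimension $m-u$, the punctured code is exactly $\text{RM}(m-u,r)$ and the rank equals $\binom{m-u}{\leq r}$, so ``more than'' must be weakened to ``at least'' both in your extremal claim and in the conclusion. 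This is harmless for the paper, since the application in \eqref{eq:nw} only uses the non-strict bound, but it should be stated correctly.
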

The lemma below follows from Proposition~1.6 in \cite{sam}:
\begin{lemma}
	\label{lem:sam}
	Let $\{\mathcal{C}_n\}$ be a sequence of Reed-Muller codes of blocklength $n$, rate $R_n$, and  having weight distribution $\left(A_{\mathcal{C}_n}(w): 0\leq w\leq n\right)$. If $\lim_{n\to \infty} R_n = R$, then 
	\[
	A_{\mathcal{C}_n}(w) \leq 2^{o(n)} \cdot \exp_2\left(2 \cdot  \ln \biggl(\frac{1}{1-R}\biggr) \cdot w \right)
	\]
	where $o(n)$ denotes a term $a_n$ such that $\lim_{n\to \infty}\frac{a_n}{n} = 0$.
\end{lemma}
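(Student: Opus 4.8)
The plan is to read off Lemma~\ref{lem:sam} as a corollary of Proposition~1.6 of \cite{sam}, which bounds the weight distribution of a \emph{single} Reed--Muller code in terms of its rate; the remaining work reduces to transferring that bound to a \emph{sequence} of codes whose rates converge to $R$, with a subexponential factor that is $o(n)$ \emph{uniformly} in $w$. First I would record the precise form of Proposition~1.6: for a fixed Reed--Muller code $\mathcal{C}$ of blocklength $n$ and rate $\rho$, it yields, for every weight $0\le w\le n$, a bound of the shape $A_{\mathcal{C}}(w)\le 2^{\psi(n)}\,\exp_2(2\ln(\tfrac{1}{1-\rho})\,w)$, where $\psi(n)/n\to 0$. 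The feature I must extract from the source is that the subexponential factor $2^{\psi(n)}$ depends only on $n$ (and possibly $\rho$) and \emph{not} on $w$, so that the estimate is uniform across all weights.

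Next I would apply this to each code $\mathcal{C}_n$ in the sequence, using its actual rate $R_n$, to obtain $A_{\mathcal{C}_n}(w)\le 2^{o(n)}\exp_2(2\ln(\tfrac{1}{1-R_n})\,w)$. To replace $R_n$ by its limit $R$, I would set $\delta_n:=\ln(\tfrac{1}{1-R_n})-\ln(\tfrac{1}{1-R})$. Since $R_n\to R\in(0,1)$ and $x\mapsto \ln\tfrac{1}{1-x}$ is continuous on $(0,1)$, we have $\delta_n\to 0$. Writing $2\ln(\tfrac{1}{1-R_n})\,w = 2\ln(\tfrac{1}{1-R})\,w + 2\delta_n w$ and bounding $2\delta_n w\le 2\max(\delta_n,0)\,n$, which is $o(n)$ precisely because $\delta_n\to 0$, I may fold the correction into the subexponential factor and recover exactly the stated bound (if $\delta_n\le 0$ the correction only tightens the estimate).

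The main obstacle is \emph{uniformity in $w$}: both the quoted form of Proposition~1.6 and the rate-transfer step must hold with a single $o(n)$ sequence that is independent of $w\in[0:n]$. I would therefore be careful to extract Proposition~1.6 with an explicitly $w$-free subexponential factor, and to observe that the transfer error $2\delta_n w$ is dominated by $2\delta_n n$ simultaneously over all admissible weights. A secondary wrinkle is parameterization: if Proposition~1.6 is stated in terms of the order $r$ and the number of variables $m$ rather than the rate $\rho$, I would first insert the identification $\rho = \binom{m}{\le r}/2^m$ and use the rate asymptotics already established (cf.\ Lemma~\ref{lem:rate}, or Remark~24 of \cite{kud1}) to rewrite the exponent in terms of the limiting rate $R$, after which the continuity argument above applies verbatim. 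Finally, I would keep the exponent in the mixed form $2\ln(\tfrac{1}{1-R})\,w$ exactly as in the source, rather than converting between natural and base-$2$ exponentials, so that the constant matches Proposition~1.6 with no bookkeeping error.
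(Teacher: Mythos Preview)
Your proposal is correct and matches the paper's approach: the paper simply states that the lemma ``follows from Proposition~1.6 in \cite{sam}'' without further argument, and your write-up fills in exactly the details one would supply---applying the proposition to each $\mathcal{C}_n$ at rate $R_n$ and absorbing the $O(|\delta_n| n)$ correction from $R_n\to R$ into the $2^{o(n)}$ factor uniformly in $w$.
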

We now provide the proof of Theorem \ref{thm:rmub}.
\begin{proof}[Proof of Theorem~\ref{thm:rmub}]
	Fix the sequence $\{{\mathcal{C}}_m(R) = \text{RM}(m,r_m)\}$ of RM codes, with
	\[
		r_m = \max \left\{\left \lfloor \frac{m}{2}+\frac{\sqrt{m}}{2}Q^{-1}(1-R)\right \rfloor,0\right\},
	\]
	where $R\in (0,R^*)$, for $R^*$ as in the statement of the theorem.
	As explained in the discussion, any Boolean polynomial $f$ whose evaluation Eval$(f)\in {\mathcal{C}}_m$ can be expressed as
	\[
	f(x_1,\ldots,x_m) = g(x_1,\ldots,x_{m-1}) + x_m\cdot h(x_1,\ldots,x_{m-1}), 
	\]
	where $g,h$ are such that Eval$(g)\in \text{RM}(m-1,r)$ and Eval$(h)\in \text{RM}(m-1,r-1)$. Now, for any codeword Eval$(g) \in \text{RM}(m-1,r_m)$ of weight $w$, we shall first calculate the number, $N_w(g)$, of codewords Eval$(h) \in \text{RM}(m-1,r_m)$ such that supp$(\text{Eval}(g)) \subseteq \text{supp}(\text{Eval}(h))$. Note that $N_w(g)$ serves as an upper bound on the number of codewords Eval$(h) \in \text{RM}(m-1,r_m-1)$ such that the same property holds.
	
	To this end, suppose that for any weight $w$, the integer  $u=u(w)$ is the smallest integer such that wt$(\text{Eval}(g)) = w \geq2^{m-1-u}$. Note that for any polynomial $g$ as above of weight $w$, the number of codewords in the code produced by shortening RM$(m-1,r_m)$ at the indices in supp$(\text{Eval}(g))$ indeed equals $N_w(g)$. Now, since dim$(\text{RM}(m-1,r_m)) = {m-1\choose \leq r_m}$, and from the fact that $w\geq 2^{m-1-u}$, we obtain by an application of Lemma \ref{lem:short} and the rank-nullity theorem, that
	\begin{align}
		\label{eq:nw}
	N_w(g)&\leq \text{exp}_2\left({m-1\choose \leq r_m} - {m-1-u \choose \leq r_m}\right)\\
	&=:M_{u(w)}. \notag
	\end{align}
	
	Let $\mathsf{R}_{{\mathcal{C}},m}^{(1,\infty)}(R)$ be the rate of the largest $(1,\infty)$-subcode of ${\mathcal{C}}_m(R)$, i.e.,
	\[
	\mathsf{R}_{{\mathcal{C}},m}^{(1,\infty)}(R) = \frac{\log_2(|\mathcal{H}_{(1,\infty)}^{(m)}|)}{2^m},
	\]
	where we let $\mathcal{H}_{(1,\infty)}^{(m)}$ denote the largest $(1,\infty)$-subcode of ${\mathcal{C}}_m(R)$. Then, it holds that
	\begin{align}
|\mathcal{H}_{(1,\infty)}^{(m)}|&\leq \sum_{g: \text{Eval}(g)\in \text{RM}(m-1,r_m)}{N_w(g)} \notag	\\
		&\stackrel{(a)}{\leq} \sum\limits_{w=2^{m-1-r_m}}^{2^{m-1}}A_{m-1,r_m}(w){M_{u(w)}} \notag\\
		&\stackrel{(b)}{\leq} \left\{\sum\limits_{w=2^{m-1-r_m}}^{2^{m-2}}A(w){M_{u(w)}}\right\}+\frac{1}{2}\cdot \text{exp}_2\left(m-1\choose \leq r_m\right)\times \notag\\
		&\ \ \ \ \ \ \ \ \ \ \ \ \ \ \ \ \ \ \ \ \ \ \ \  \ \text{exp}_2\left({m-1\choose \leq r_m} - {m-2\choose \leq r_{m}}\right)\notag \\
		&\stackrel{(c)}{\leq} \left\{\sum\limits_{w=2^{m-1-r_m}}^{2^{m-2}}A(w){M_{u(w)}}\right\}+
		\frac{1}{2}\times\notag\\
		&\ \ \ \ \ \ \ \ \ \ \ \ \ \ \ \ \ \ \ \ \ \ \ \  \ \text{exp}_2\left({m-1\choose \leq r_m}+{m-2\choose \leq r_{m}}\right)\notag\\
		&\stackrel{(d)}{\leq} \Bigg\{\sum\limits_{i=1}^{{r_m-1}}A([2^{m-2-i}: 2^{m-1-i}])\times \notag\\
		&\ \ \ \ \ \ \ \ \ \ \ \   \text{exp}_2\left({m-1\choose \leq r_m} - {m-2-i \choose \leq r_m}\right)\Bigg\}+ \notag\\
		 &\ \ \ \ \ \ \ \ \ \ \ \ \ \ \ \ \ \ \ \ \ \frac{1}{2}\cdot\text{exp}_2\left({m-1\choose \leq r_m}+{m-2\choose \leq r_{m}}\right) \label{eq:sub1},
	\end{align}
where $A([a:b])$ is short for $\sum_{i=a}^{b}A(w)$. Here,
\begin{enumerate}[label = (\alph*)]
	\item follows from equation \eqref{eq:nw}, and
	\item holds due to the following fact: since the all-ones codeword $\mathbf{1}$ is present in RM$(m-1,r_m)$, it implies that $A_{m-1,r_m}(w) = A(w) = A(2^{m-1}-w)$, i.e., that the weight distribution of codewords is symmetric about weight $w=2^{m-2}$. Therefore,
	\begin{equation}
		\label{eq:Aw}
		A(w>2^{m-2})\leq \frac{1}{2}\cdot \text{exp}_2{m-1\choose \leq r_m}.
	\end{equation}
Next,
	\item follows from Pascal's rule in combinatorics that for any $n,k \in \mathbb{N}$ with $n>k$:
	\[
	{n-1\choose k}+{n-1\choose k-1} = {n\choose k}.
	\]
	Picking $n=m-1$ and $k=r_m$, we obtain that
	\[
	{m-1\choose r_m}-{m-2\choose r_m} = {m-2\choose r_m-1},
	\]
	and hence that
	\[
	{m-1\choose \leq r_m}-{m-2\choose \leq r_m} < {m-2\choose \leq r_m}, \ \text{and}
	\]
	
	\item holds again from equation \eqref{eq:nw}.
\end{enumerate}
It is clear that a simplification of equation (d) depends crucially on good upper bounds on the weight distribution function. Now, since the code RM$(m-1,r_m)$ is obtained by shortening the code RM$(m,r_m)$ at positions $\mathbf{z} = (z_1,\ldots,z_m)$ where $z_m=1$, we obtain that $A_{m-1,r_m}(w)\leq A_{m,r_m}(w)$. 
Thus, we can use the result in Lemma \ref{lem:sam}, to get
\begin{align}
	&A_{m-1,r_m}([2^{m-2-i}: 2^{m-1-i}]) \notag\\
	&\leq A_{m,r_m}([2^{m-2-i}: 2^{m-1-i}]) \notag\\ 
	&\leq 2^{o(2^m)}\cdot \sum_{w=2^{m-2-i}}^{2^{m-1-i}}2^{2 \cdot \ln \left(\frac{1}{1-R}\right)\cdot w} \notag\\
	&\leq 2^{o(2^m)}\cdot \text{exp}_2\left(2 \cdot \ln \left(\frac{1}{1-R}\right)\cdot 2^{m-1-i}\right) =: B_i(m) \label{eq:sub2}
\end{align}
Therefore, putting \eqref{eq:sub2} back in \eqref{eq:sub1}, we get that
\begin{align}
	&2^m\mathsf{R}^{(1,\infty)}_{{\mathcal{C}},m}(R)\notag \\
	& = \log_2|\mathcal{H}_{(1,\infty)}^{(m)}| \notag\\
	&\leq {m-1 \choose \leq r_m}+
	 \log_2\Bigg\{\frac{1}{2}\cdot \text{exp}_2{m-2 \choose \leq r_m}+\notag \\
	 &\ \ \ \ \ \ \ \ \ \ \ \ \ \ \ \ \ \ \ \ \ \ \ \ \ \ \ \ \sum_{i=1}^{r_m-1}B_i(m)\cdot \text{exp}_2\left( -{m-2-i \choose \leq r_m}\right)\Bigg\} \notag\\
	  &= {m-1 \choose \leq r_m}+\log_2\left(\alpha(m)+\beta(m)\right),\label{eq:subs3}
\end{align}
where we define
\begin{align}
	\alpha(m)&:= \frac12 \cdot \text{exp}_2{m-2 \choose \leq r_m}, \ \text{and} \notag\\
	\beta(m)&:= \sum_{i=1}^{r_m-1}B_i(m)\cdot \text{exp}_2\left( -{m-2-i \choose \leq r_m}\right). \label{eq:beta}
\end{align}
In Appendix A, we show that for all $\delta>0$ sufficiently small and for $m$ sufficiently large, it holds that
\begin{align}
\beta(m) \ & \le \ 2^{o(2^m)} \cdot \exp_2\Bigg( 2^{m-3}\Bigg[4\ln\left(\frac{1}{1-R}\right) - R(1-\delta)\Bigg]\Bigg) \notag \\ 
& =: \ \theta(m).
\label{eq:inter}
\end{align}
Now, using Lemma~\ref{lem:rate}, we have
\[
\lim_{m\to \infty} \frac{1}{2^m} {m-2 \choose \leq r_{m}} = \frac{R}{4}.
\]
Hence, for small $\delta>0$, and for $m$ large enough, it holds that
\[
{m-2 \choose \leq r_{m}}\leq (1+\delta)\cdot 2^{{m}-2}\cdot R.
\]
Therefore, we get that
\begin{align}
\alpha(m) \ \leq \ \text{exp}_2\left((1+\delta)\cdot2^{{m}-2}\cdot R\right) \ =: \ \eta(m). \label{eq:inter2}
\end{align}
Now, substituting \eqref{eq:inter} and \eqref{eq:inter2} in \eqref{eq:subs3}, we get that 
\begin{align}
2^m\mathsf{R}^{(1,\infty)}_{{\mathcal{C}},m}(R)\leq {m-1 \choose \leq r_m}+\log_2\left(\eta(m)+\theta(m)\right).
\label{eq:inter3}
\end{align}
Putting everything together, we see that
\begin{align}
	&{R}^{(1,\infty)}_{{\mathcal{C}}}(R) \notag\\ &= \limsup_{m\to \infty}\mathsf{R}^{(1,\infty)}_{{\mathcal{C}},m}(R)\notag\\
	&\leq \lim_{m\to \infty} \frac{1}{2^m} \left[{m-1 \choose \leq r_m}+\log_2\left(\eta(m)+\theta(m)\right)\right] \notag\\
	&\stackrel{(a)}{\leq} \lim_{m\to \infty} \frac{1}{2^m} \left[{m-1 \choose \leq r_m}+\log_2\left(2\cdot \theta(m)\right)\right] \notag\\
	&\stackrel{(b)} =\frac{R}{2} + \lim_{m\to \infty} \frac{1}{2^m} \cdot \log_2 \theta(m) \notag \\
	&= \frac{R}{2}+\frac{4  \ln\left(\frac{1}{1-R}\right)  - 
		R(1-\delta)}{8} \notag\\
	&= \frac{3R}{8}+\frac{1}{2}\ln \left(\frac{1}{1-R}\right)+\frac{R\delta}{8} \label{eq:final}.
\end{align}
Note that inequality (a) follows from the fact for any $R \in (0,1)$, $\eta(m)\leq \theta(m)$ holds for all sufficiently small $\delta > 0$. Further, equation (b) is valid because $\lim_{m\to \infty} \frac{1}{2^m} {m-1 \choose \leq r_m} = \frac{R}{2}$, by Lemma~\ref{lem:rate}. Since equation \eqref{eq:final} holds for all $\delta>0$ sufficiently small, we can let $\delta\to 0$, thereby obtaining that
\begin{align}
	\label{eq:finalf}
	{R}^{(1,\infty)}_{{\mathcal{C}}}(R) \le \frac{3R}{8}+\frac{1}{2}\ln \left(\frac{1}{1-R}\right).
\end{align}
It can be numerically verified that for $R\in (0,R^*)$, where $R^*$ is as in the statement of the theorem, it holds that the right-hand side of equation \eqref{eq:finalf} is less than $R$, thereby providing a non-trivial upper bound. Finally, we note that since $|\mathcal{H}_{(1,\infty)}^{(m)}|\leq |\mathcal{C}_m(R)|$, the trivial upper bound of ${R}^{(1,\infty)}_{{\mathcal{C}}}(R)\leq R$ holds for all $R\in (0,1)$.
\end{proof}

\section{Conclusion}
\label{sec:conclusion}
This work proposed explicit, deterministic coding schemes, without feedback, for binary memoryless symmetric (BMS) channels with $(d,\infty)$- or $(0,k)$-runlength limited (RLL) constrained inputs. In particular, explicit achievable rates were calculated by identifying specific subcodes, of a capacity-achieving (over the unconstrained BMS channel) sequence of codes, all of whose codewords obey the input constraint. Furthermore, an upper bound was derived on the rate of the largest $(1,\infty)$-RLL subcodes of the specific RM codes that we had used in our lower bounds.




There is much scope for future work in this line of investigation. Firstly, following the close relationship between the size of $(1,\infty)$-RLL subcodes and the weight distribution of RM codes established in this work, a more sophisticated analysis of achievable rates can be performed with the availability of better lower bounds on the weight distribution of RM codes. Likewise, sharper upper bounds on the weight distribution of RM codes will also lead to better upper bounds on the rate of any $(1,\infty)$-RLL subcodes of a specific sequence of RM codes. It also remains to be seen how permutations of the co-ordinates of RM codes affect the size of their $(d,\infty)$-RLL subcodes. Another area of exploration could be the identification of other explicit (but not necessarily deterministic) coding schemes, over this class of channels, whose rates are computable. Such advancements will also help lend insight into the capacity of input-constrained BMS channels without feedback, which is a well-known open problem.
\section*{Acknowledgements}
This work was supported in part by a Qualcomm Innovation Fellowship India 2020. The work of V.~A.~Rameshwar was supported by a Prime Minister's Research Fellowship, from the Ministry of Education, Govt. of India. The work of N.~Kashyap was supported in part by MATRICS grant \ MTR/2017/000368 from the Science and Engineering Research Board (SERB), Govt. of India. 

\appendices
\section{Proof of Inequality \eqref{eq:inter}}
In this section, we show that the inequality $\beta(m)\leq \theta(m)$ holds, for large $m$ and sufficiently small $\delta>0$, with $\beta(m)$ and $\theta(m)$ defined in equations \eqref{eq:beta} and \eqref{eq:inter}, respectively. We fix an $R\in (0,R^*)$, where $R^*$ is defined in Theorem \ref{thm:rmub}.

We start with the expression 
\begin{equation}
\beta(m) = 2^{o(2^m)} \cdot \sum_{i=1}^{r_m-1} \exp_2 \left(\ln \left(\frac{1}{1-R}\right)\cdot 2^{m-i} - {m-2-i \choose \leq r_m}\right)
\label{eq:beta:2}
\end{equation}
We will split the sum $\sum\limits_{i=1}^{r_m-1}$ into two parts: $\sum\limits_{i=1}^{t_m}$ and $\sum\limits_{i={t_m}+1}^{r_m-1}$, where $t_m := \lfloor m^{1/3} \rfloor$.  For $i \in [t_m+1:r_m-1]$, we have
$$
\ln \left(\frac{1}{1-R}\right) \cdot 2^{m-i} - {m-2-i \choose \leq r_m} \ \le \  \ln \left(\frac{1}{1-R}\right) \cdot 2^{m-t_m} \ = \ o(2^m).
$$
Thus, the contribution of each term of the sum $\sum\limits_{i={t_m}+1}^{r_m-1}$ is $2^{o(2^m)}$, and since there are at most $r_m = O(m)$ terms in the sum, the total contribution from the sum is $2^{o(2^m)}$. 

Turning our attention to $i \in [t_m]$, we write
\begin{align}
\ln & \left(\frac{1}{1-R}\right) \cdot 2^{m-i} - {m-2-i \choose \leq r_m} \notag \\ & \ \ \ \ = \ 2^{m-2-i} \cdot \left[ 4 \ln\left(\frac{1}{1-R}\right) - \frac{1}{2^{m-2-i}} {m-2-i \choose \leq r_m}\right]. \label{i_in_tm_ineq}
\end{align}
By Lemma~\ref{lem:rate}, we obtain that $\frac{1}{2^{m-2-i}} {m-2-i \choose \leq r_m}$ converges to $R$ for all $i \in [t_m]$. In fact, with a bit more effort, we can show that this convergence is uniform in $i$. 
Note that, since $i \le t_m$, by virtue of \eqref{eq:rm_diff}, we have $|r_m - r_{m-2-i}|  \le  \frac{t_m+2}{2} + \frac{\sqrt{t_m+2}}{2} \lvert Q^{-1}(1-R)\rvert + 1 =: \nu_m$. Using the notation in the proof of Lemma~\ref{lem:rate}, we have $\frac{1}{2^{m-2-i}} {m-2-i \choose \leq r_m} = \Pr[S_{m-2-i} \le r_m]$. Thus, analogous to \eqref{sandwich}, we have for all sufficiently large $m$,
\begin{align*}
& \Pr[\overline{S}_{m-2-i} \le Q^{-1}(1-R) - \frac{\nu_m}{\frac12 \sqrt{m-2-t_m}}] \\
& \ \ \ \ \ \le \ \Pr[S_{m-2-i} \le r_m]  \\
& \ \ \ \ \ \ \ \ \ \ \ \le \ \Pr[\overline{S}_{m-2-i} \le Q^{-1}(1-R) + \frac{\nu_m}{\frac12 \sqrt{m-2-t_m}}].
\end{align*}
Now, we apply the Berry-Esseen theorem (see e.g., \cite[Theorem~3.4.17]{durrett}) which, in this case, asserts that $\left\lvert\Pr[\overline{S}_m \le x] - \Pr[Z \le x]\right\rvert \le 3/\sqrt{m}$, for all $x \in \mathbb{R}$ and positive integers $m$, where $Z \sim N(0,1)$. Thus, $\left\lvert\Pr[\overline{S}_{m-2-i} \le x] - \Pr[Z \le x]\right\rvert \le \frac{3}{\sqrt{m-2-i}} \le \frac{3}{\sqrt{m-2-t_m}}$ holds for all $x \in \mathbb{R}$ and $i \in [t_m]$. This yields
\begin{align*}
& \Pr[Z \le Q^{-1}(1-R) - \frac{\nu_m}{\frac12 \sqrt{m-2-t_m}}] - \frac{3}{\sqrt{m-2-t_m}} \\
& \ \ \le \ \Pr[S_{m-2-i} \le r_m]  \\
& \ \ \ \ \le \ \Pr[Z \le Q^{-1}(1-R) + \frac{\nu_m}{\frac12 \sqrt{m-2-t_m}}] + \frac{3}{\sqrt{m-2-t_m}}.
\end{align*}
Since $t_m$ and $\nu_m$ are both $o(\!\!\sqrt{m})$, we deduce that, as $m \to \infty$, $\Pr[S_{m-2-i} \le r_m] = \frac{1}{2^{m-2-i}} {m-2-i \choose \leq r_m}$ converges to $R$ uniformly in $i \in [t_m]$. 

Hence, for any $\delta \in (0,1)$ and $m$ large enough, it holds for all $i \in [t_m]$ that
\[
\frac{1}{2^{m-2-i}} {m-2-i \choose \leq r_m} \geq (1 - \delta) R,
\]
so that, carrying on from \eqref{i_in_tm_ineq}, 
$$
\ln \left(\frac{1}{1-R}\right) \cdot 2^{m-i} - {m-2-i \choose \leq r_m} \ \le \ 2^{m-3} \cdot \left[ 4 \ln\left(\frac{1}{1-R}\right) - (1-\delta)R \right]. \label{tm_ineq}
$$

To put it all together, recall that we split the sum $\sum\limits_{i=1}^{r_m-1}$ in \eqref{eq:beta:2} into two parts: $\sum\limits_{i=1}^{t_m}$ and $\sum\limits_{i={t_m}+1}^{r_m-1}$, where $t_m := \lfloor m^{1/3} \rfloor$. For sufficiently small $\delta > 0$, and all sufficiently large $m$, the contribution from the first sum is at most 
$$
m^{1/3} \cdot \exp_2\left(2^{m-3} \cdot \left[ 4 \ln\left(\frac{1}{1-R}\right) - (1-\delta)R \right] \right), 
$$
while that from the second sum is $\exp_2(o(2^m))$. Therefore, the overall sum $\sum\limits_{i=1}^{r_m-1}$ in \eqref{eq:beta:2} can be bounded above, for all sufficiently large $m$, by 
$$
2 m^{1/3} \cdot \exp_2\left(2^{m-3} \cdot \left[ 4 \ln\left(\frac{1}{1-R}\right) - (1-\delta)R \right] \right),
$$
Consequently,
$$
\beta(m) \ \le \ 2^{o(2^m)} \cdot \exp_2\Bigg( 2^{m-3}\Bigg[4\ln\left(\frac{1}{1-R}\right)  - R(1-\delta)\Bigg]\Bigg) \ =: \ \theta(m).
$$


\ifCLASSOPTIONcaptionsoff
  \newpage
\fi



%
\bibliographystyle{IEEEtran}
{\footnotesize
	\bibliography{references}}

%

\begin{IEEEbiographynophoto}{V.~Arvind Rameshwar}
Biography text here.
\end{IEEEbiographynophoto}

\begin{IEEEbiographynophoto}{Navin Kashyap}
Biography text here.
\end{IEEEbiographynophoto}





\end{document}